\documentclass[conference]{IEEEtran}
\IEEEoverridecommandlockouts
\hyphenation{op-tical net-works semi-conduc-tor}
\hyphenation{op-tical net-works semi-conduc-tor}
\usepackage{array}
\usepackage{amsmath}
\usepackage{float}
\usepackage{multirow}
\usepackage{array}
\usepackage{cancel}
\usepackage{tabu}
\usepackage{amsthm}
\usepackage{epsfig}
\usepackage{epstopdf}
\usepackage{epsf}
\usepackage{color}
\usepackage[english]{babel}

\theoremstyle{definition}
\newtheorem{theorem}{Theorem}
\newtheorem{lemma}[theorem]{Lemma}
\newtheorem{definition}{Definition}

\usepackage{amssymb}
\usepackage{cite}
\usepackage{amsmath,amssymb,amsfonts}
\usepackage{textcomp}
\usepackage{xcolor}
\usepackage{algorithm}
\usepackage[noend]{algpseudocode}
\usepackage[T1]{fontenc}
\usepackage{enumitem}
\usepackage{lipsum}
\usepackage{mathtools}
\usepackage{cuted}

\usepackage{optidef}
\usepackage{graphicx}
\usepackage[caption=false,font=footnotesize]{subfig}
\usepackage{hyperref}

\begin{document}
    
	\title{Rate-Distortion-Classification Representation Theory for Bernoulli Sources}
    

    \author{\IEEEauthorblockN{$\textnormal{Nam Nguyen}$, $\textnormal{Thinh Nguyen}$, and $\textnormal{Bella Bose}$ \thanks{This work was supported by the National Science Foundation under Grant No. CCF:SHF:2417898.}}
    \IEEEauthorblockA{$\textnormal{School of Electrical and Computer Engineering, Oregon State University, Corvallis, OR, 97331}$\\}
    Emails: nguynam4@oregonstate.edu, thinhq@eecs.oregonstate.edu, bella.bose@oregonstate.edu}

\maketitle
\begin{abstract}
We study task-oriented lossy compression through the lens of rate-distortion-classification (RDC) representations. The source is Bernoulli, the distortion measure is Hamming, and the binary classification variable is coupled to the source via a binary symmetric model. Building on the one-shot common-randomness formulation, we first derive closed-form characterizations of the one-shot RDC and the dual distortion-rate-classification (DRC) tradeoffs. We then use a representation-based viewpoint and characterize the achievable distortion-classification (DC) region induced by a fixed representation by deriving its lower boundary via a linear program. Finally, we study universal encoders that must support a family of DC operating points and derive computable lower and upper bounds on the minimum asymptotic rate required for universality, thereby yielding bounds on the corresponding rate penalty. Numerical examples are provided to illustrate the achievable regions and the resulting universal RDC/DRC curves.
\end{abstract}

\renewcommand\IEEEkeywordsname{Keywords}
\begin{IEEEkeywords}
Lossy compression, rate-distortion-classification tradeoff, universal encoder representations, Bernoulli sources.
\end{IEEEkeywords}

\section{Introduction}
Rate-distortion theory provides a fundamental framework for understanding the limits of lossy compression by characterizing the minimum rate required to represent a source subject to a fidelity constraint~\cite{berger1971ratedistortion,Cover2006}. While classical rate-distortion theory focuses on reconstruction accuracy, it is now well understood that minimizing distortion alone does not necessarily preserve information relevant for downstream tasks or human perception.

This observation has motivated extensive research on task-aware and perception-constrained lossy compression. In particular, Blau and Michaeli~\cite{blau2019rethinking} introduced the rate-distortion-perception (RDP) framework, which incorporates perceptual quality as a third dimension alongside rate and distortion. By modeling perceptual quality through distribution-level constraints between the source and reconstruction, the RDP framework reveals a fundamental tradeoff between distortion fidelity and perceptual realism. These ideas have inspired both theoretical developments and practical deep learning-based compression systems~\cite{blau2018perception,liu2022lossy,larsen2016autoencoding,agustsson2019generative,mentzer2020high,yang2023lossy}.

More recently, attention has shifted toward compression schemes that explicitly account for downstream inference and decision-making. In such settings, the reconstruction is required not only to approximate the source, but also to preserve information relevant to a latent task variable~\cite{tishby2000information, wu2016information, stavrou2023role}. The rate-distortion-classification (RDC) framework formalizes this perspective by introducing information-theoretic constraints that quantify the residual uncertainty of the task variable given the reconstruction~\cite{CDP,Zhang2023,Wang2024}. Compared to perception-based constraints, classification constraints directly capture task performance and are particularly well suited for decision-oriented systems~\cite{mentzer2023lossy,Zhongyue2023,Lei2023,Luo2021,nguyen2026cross}.

While existing RDC results characterize optimal tradeoffs for individual operating points, practical systems often require a single encoder to support multiple distortion and classification requirements. Redesigning or retraining an encoder for each operating point is undesirable in resource-constrained or multi-task environments. This motivates the notion of \emph{universal encoder representations}, where a fixed encoder produces a shared representation that can be paired with different decoders to meet a family of objectives. The fundamental cost of enforcing such universality is captured by the associated \emph{rate penalty}, defined as the excess rate required by a universal encoder relative to the best task-specific design.

A universal encoder representation-based framework for perception-constrained lossy compression was developed by Zhang \emph{et al.}~\cite{UniversalRDPs}, where universal RDP representations and one-size-fits-all encoders were formalized, along with the associated rate penalty. Building on this framework, Qian \emph{et al.}~\cite{Qian2022} specialized the theory to Bernoulli sources under Hamming distortion, enabling closed-form characterizations of the one-shot RDP function and a complete description of achievable distortion-perception regions. Extending the notion of universality from perception to task-oriented settings, Nguyen \emph{et al.}~\cite{Nam2025} developed the theoretical framework for universal RDC representations, establishing fundamental limits on reusing a single encoder across multiple distortion-classification (DC) objectives.

In this paper, we study RDC representations for Bernoulli sources under Hamming distortion, where a binary classification variable is coupled to the source through a symmetric noise model. Building on the binary-source perception framework in~\cite{Qian2022}, we extend the analysis to classification constraints. We first derive closed-form characterizations of the one-shot RDC and distortion-rate-classification (DRC) tradeoffs under common randomness and derive the achievable DC region induced by a fixed representation. We then consider the universal setting, in which a single encoder must support a family of DC operating points, and derive computable lower and upper bounds on the resulting minimum rate penalty.

\section{RDC Tradeoff with Common Randomness}

\subsection{Problem Formulation}
We study a task-oriented lossy compression problem in which an observable source
$X \sim p_X$ is statistically coupled with a collection of latent target variables
$S_1,\dots,S_K \sim p_{S_1,\dots,S_K}$.
The joint distribution $p_{X,S_1,\dots,S_K}$ captures the dependence between the observable data and the underlying task variables.
Although the variables $S_1,\dots,S_K$ are not directly observed, they are typically recoverable, possibly with uncertainty, from the source observation $X$.
For example, when $X$ represents an image, the variables $S_k$ may correspond to semantic attributes such as object classes or scene categories.

\begin{figure}[h]
\centering
\includegraphics[width=0.48\textwidth]{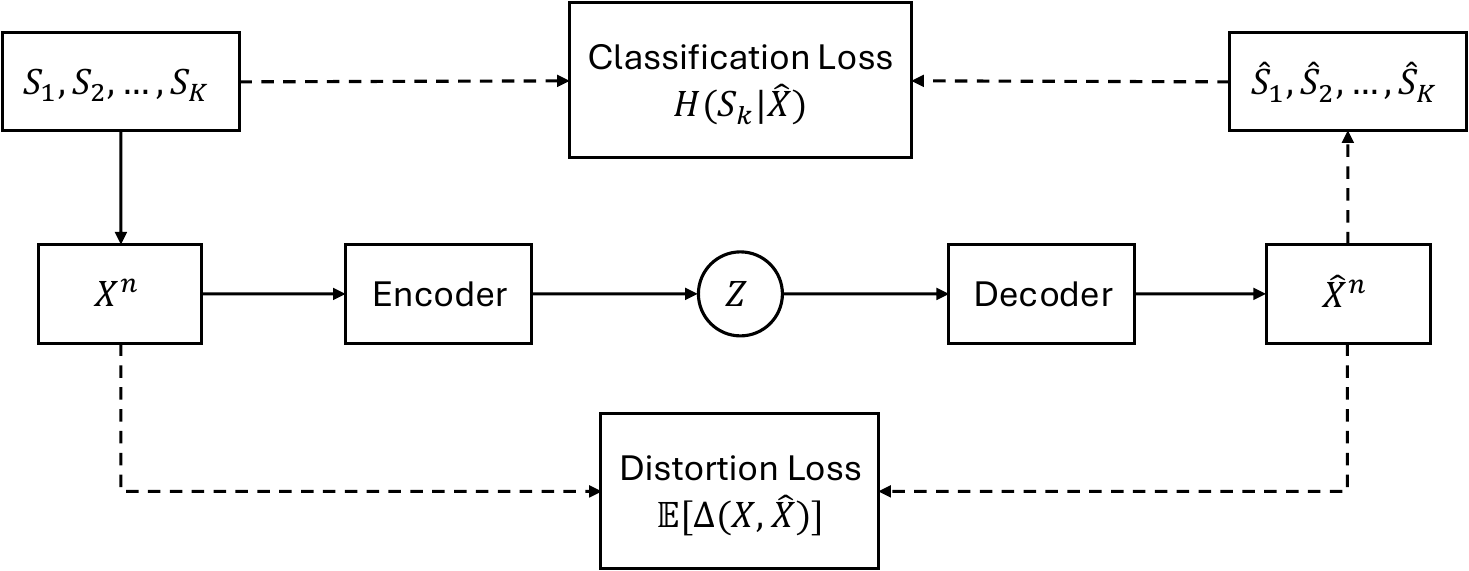}
\caption{Task-oriented lossy compression framework.}
\label{Lossy_Compression_Framework}
\end{figure}

As depicted in Figure~\ref{Lossy_Compression_Framework}, the system consists of an encoder-decoder pair.
Given a length-$n$ source sequence $X^n \sim p_X^n$, the encoder
$f: \mathcal{X}^n \to \{1,\dots,2^{nR}\}$
maps the source sequence to a finite-rate message $Z$.
The decoder
$g: \{1,\dots,2^{nR}\} \to \hat{\mathcal{X}}^n$
uses this message to produce a reconstruction $\hat{X}^n$.
The fundamental objective is to minimize the required communication rate while maintaining both reconstruction fidelity and task performance \cite{Wang2024, Nam2025}.

\textbf{Distortion constraint.}
The reconstruction $\hat{X}$ is required to satisfy an average distortion constraint
\begin{equation}
\mathbb{E}[\Delta(X, \hat{X})] \leq D,
\end{equation}
where $\Delta(\cdot,\cdot)$ denotes a prescribed distortion measure, such as Hamming distortion or mean-squared error.

\textbf{Classification constraint.}
In addition to reconstruction quality, we impose explicit constraints on the preservation of task-relevant information.
Specifically, for each target variable $S_k$, the reconstruction must satisfy
\begin{equation}
    H(S_k|\hat{X}) \leq C_k, \qquad k \in [K],
    \label{ClassificationConstraint}
\end{equation}
for some $C_k > 0$.
This constraint bounds the residual uncertainty of the task variable $S_k$ given the reconstructed source and ensures a prescribed level of classification performance \cite{Wang2024, Nam2025}.

\begin{definition}[\textbf{One-Shot \textcolor{black}{Operational} RDC Function}]
\textcolor{black}{Let $\Theta$ be a non-empty set of distortion-classification pairs $(D, C)$, where each pair specifies a joint distortion-classification objective.} 
A rate $R$ is \textit{one-shot achievable} with respect to $\Theta$ for the source variable $X$ if there exists a random seed $U$, independent of $X$, and an encoder $p_{Z|X,U}$ such that $H(Z|U) \leq R$.
For every $(D, C) \in \Theta$, a decoder $p_{\hat{X}|Z,U}$ can be constructed to meet the constraints $\mathbb{E}[\Delta(X, \hat{X})] \leq D$ and $H(S|\hat{X}) \leq C$, where the joint distribution satisfies $p_{X, Z, \hat{X}, U} = p_X p_U p_{Z|X,U} p_{\hat{X}|Z,U}$, equivalently $X \rightarrow Z \rightarrow \hat{X}$ given $U$.
The infimum of such rates is denoted by $R^*(\Theta)$. \label{def:RDC_single}
When $\Theta$ consists of a single $(D, C)$ pair:
\begin{mini!}|s|[2]
{p_U, p_{M|X,U}, p_{\hat{X}|M,U}}
{ H(M|U)}
{\label{RDPC}}
{R^*(D, C) =}
\addConstraint{\mathbb{E}[\Delta(X, \hat{X})]}{\leq D}
\addConstraint{H(S | \hat{X})}{\leq C.}
\end{mini!}
\end{definition}

\begin{figure}[]
\centering
\includegraphics[width=0.34\textwidth]{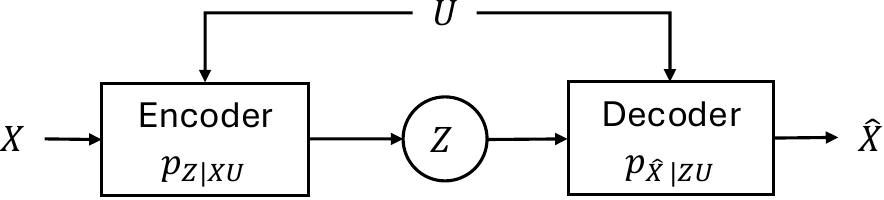}
\caption{One-shot setting with common randomness.}
\label{One_Shot_Common_Randomess}
\end{figure}

\textbf{Role of the common randomness $U$.}
In Figure \ref{One_Shot_Common_Randomess}, the random seed $U$ represents common randomness shared between the encoder and the decoder.
When such common randomness is available, the message $Z$ can be encoded using approximately $H(Z|U)$ bits through variable-length coding.

\begin{definition}[\textbf{Asymptotic \textcolor{black}{Operational} RDC Function}]
A rate $R$ is \textit{asymptotically achievable} with respect to $\Theta$ for the i.i.d.\ source sequence
$\{X^{(t)}\}_{t=1}^{\infty} \overset{\text{i.i.d.}}{\sim} p_X$
if for some positive integer $n$, there exists a random seed $U$ and an encoder $p_{Z|X^n, U}$ such that $\frac{1}{n} H(Z | U) \leq R$.
For every $(D, C) \in \Theta$, a decoder $p_{\hat{X}^n | Z, U}$ can be constructed to satisfy $\frac{1}{n} \sum_{t=1}^n \mathbb{E}[\Delta(X^{(t)}, \hat{X}^{(t)})] \leq D$ and $\frac{1}{n} \sum_{t=1}^n H(S|\hat{X}^{(t)}) \leq C$, where $p_{X^n, Z, \hat{X}^n, U}
= p_{X^n} p_U p_{Z | X^n, U} p_{\hat{X}^n | Z, U}$. The infimum of such achievable rates is denoted by $R^{(\infty)}(\Theta)$. When $\Theta$ consists of a single $(D, C)$ pair: 
\begin{mini*}|s|[2] 
{p_U, p_{Z | X^n, U}, p_{\hat{X}^n | Z, U}} 
{\frac{1}{n} H(Z | U)} 
{\label{RDPC}} 
{R^{(\infty)}(D, C) =} 
\addConstraint{\!\!\!\!\!\!\!\!\!\!\!\!\!\!\!\!\!\!\!\!\!\!\!\! \frac{1}{n} \sum_{t=1}^n \mathbb{E}[\Delta(X^{(t)}, \hat{X}^{(t)})]}{\leq D} 
\addConstraint{\!\!\!\!\!\!\!\!\!\!\!\!\!\!\!\!\!\!\!\!\!\!\!\! \frac{1}{n} \sum_{t=1}^n H(S|\hat{X}^{(t)})}{\leq C.} 
\end{mini*} 
\end{definition}

\subsection{RDC/DRC Functions for Bernoulli Sources}
We now consider the one-shot RDC framework to a binary setting. The problem can be expressed entirely in terms of the conditional distribution $p_{\hat{X}|X,U}$ with $X$ and $U$ being independent, which directly specifies the reconstruction $\hat{X}$ and obviates the need for an explicit intermediate representation $M$, as illustrated in Figure~\ref{fig:Architectures} and formalized in Theorem~\ref{thm:oneshot_random_simplified}. 

\begin{figure}[]
\centering
\includegraphics[width=0.87\linewidth]{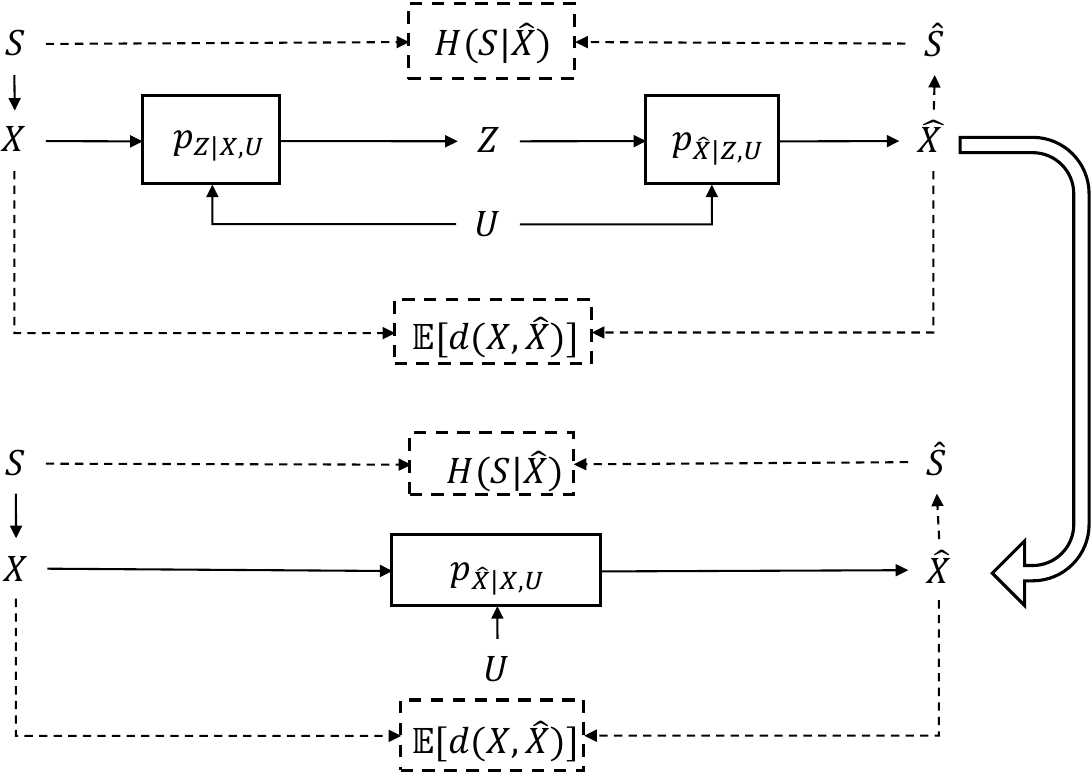}
\caption{System architecture of Theorem~\ref{thm:oneshot_random_simplified}.}
\label{fig:Architectures}
\end{figure}

\begin{theorem}
\label{thm:oneshot_random_simplified}
Suppose that $\Theta$ contains a single pair $(D, C)$.
For a source \( X \sim p_X \) and an associated classification variable \( S \sim p_S \), the one-shot RDC function can be written as
\begin{mini!}|s|[2] 
{p_U, p_{\hat{X}|X,U}} 
{ H(\hat{X}|U) }  
{\label{prob:oneshot_random_simplified}} 
{R^*(D, C) = } 
\addConstraint{\!\!\!\!\!\!\!\!\!\!\!\!\!\! H(\hat{X}|U,X) = 0, 
 I(X,U) = 0}{}  
\addConstraint{\!\!\!\!\!\!\!\!\!\!\!\!\!\! \mathbb{E}[\Delta(X, \hat{X})]\leq D, H(S | \hat{X}) \leq C.}{}  
\end{mini!}  
where $p_{U,X,\hat{X}}=p_U\,p_X\,p_{\hat{X}|U,X}$.
\end{theorem}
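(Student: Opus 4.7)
The plan is to prove Theorem~\ref{thm:oneshot_random_simplified} by a two-sided comparison between the three-variable program of Definition~\ref{def:RDC_single} and the simplified two-variable program in the statement, showing that the optimal values coincide. The underlying idea is that the intermediate message $M$ can be collapsed into $\hat{X}$ for the achievability part and, conversely, any residual encoder/decoder randomness can be absorbed into the shared seed without changing the induced joint law on $(X,\hat{X})$ or increasing the coding rate.

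\emph{Achievability direction.} Given any feasible $(p_U, p_{\hat{X}|X,U})$ for the simplified program, I would set $M := \hat{X}$ and let $p_{\hat{X}|M,U}(\hat{x}|m,u) = \mathbf{1}\{\hat{x}=m\}$ be the identity decoder. Then $X \to M \to \hat{X}$ is trivially Markov given $U$, the joint $p_{X,\hat{X}}$ is preserved (so the distortion constraint and, through $p_{S|X}$, the classification constraint continue to hold), and $H(M|U)=H(\hat{X}|U)$. Hence the simplified optimum upper-bounds $R^*(D,C)$.

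\emph{Converse direction.} Given any feasible $(p_U, p_{M|X,U}, p_{\hat{X}|M,U})$ for Definition~\ref{def:RDC_single}, I would apply the functional representation lemma to introduce auxiliary variables $V_1$ and $V_2$, drawn mutually independently and jointly independent of $(X,U)$, such that $M = \phi(X, U, V_1)$ and $\hat{X} = \psi(M, U, V_2)$ with the correct conditional laws. Setting $U' := (U, V_1, V_2)$, one checks that $X \perp U'$ because $(V_1, V_2) \perp (X, U)$, and that $\hat{X} = \psi(\phi(X,U,V_1),U,V_2)$ is a deterministic function of $(X, U')$, so the two structural constraints $I(X; U')=0$ and $H(\hat{X}|X,U')=0$ of the simplified program are satisfied. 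The rate inequality follows from
\begin{equation*}
H(\hat{X}|U') \;\leq\; H(M|U') \;\leq\; H(M|U),
\end{equation*}
where the first step uses that $\hat{X}$ is a deterministic function of $(M,U')$, and the second uses $V_2 \perp M \mid (U, V_1)$ together with the ``conditioning reduces entropy'' inequality. Since the marginal law on $(X, M, \hat{X})$ is unchanged, the distortion and classification constraints transfer, so the simplified optimum also lower-bounds $R^*(D,C)$.

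The main obstacle is the functional-representation step in the converse: one must invoke it cleanly enough that $V_1$ and $V_2$ are truly independent of $(X, U)$ and of each other, while simultaneously preserving the original joint $p_{X, U, M, \hat{X}}$ and hence the two operational constraints. Once this bookkeeping is handled, equality of the two programs is a direct consequence of the two entropy inequalities above, and the theorem follows.
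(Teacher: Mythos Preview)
Your proposal is correct and follows essentially the same route as the paper's proof: the achievability direction sets $M:=\hat{X}$ with the identity decoder, and the converse invokes the functional representation lemma twice to absorb the encoder and decoder randomness into an augmented seed $U'=(U,V_1,V_2)$, after which the chain $H(\hat{X}\mid U')\le H(M\mid U')\le H(M\mid U)$ closes the argument. The only cosmetic difference is that the paper writes the rate chain as $H(Z\mid U)\ge H(Z\mid U')\ge H(\hat{X}\mid U')$, but this is exactly your inequality read right to left.
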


\begin{proof}
The claim can be shown based on Theorem~3 in~\cite{liu2022lossy}. A self-contained derivation is included in Appendix~\ref{app:proof_oneshot_random_simplified}.
\end{proof}

The constraint $H(\hat{X}|X,U)=0$ enforces that $\hat{X}$ is a deterministic function of $(X,U)$, i.e., $\hat{X}=f(X,U)$ for some mapping $f$, with the auxiliary variable $U$ serving as the sole source of randomness. In this equivalent architecture, the encoder deterministically maps $(X,U)$ to $\hat{X}$ and subsequently compresses $\hat{X}$ losslessly at an average rate approaching $H(\hat{X}|U)$, while the constraint $H(S|\hat{X})\leq C$ ensures that the relevant information for classification is preserved. The decoder then performs lossless decompression and outputs the reconstruction $\hat{X}$.

Next, consider a Bernoulli source \( X \sim \text{Bern}(q_X) \) with $0 \leq q_X \leq \frac{1}{2}$.
Let $\oplus$ denote modulo-$2$ addition so that $X \oplus \hat{X} = 1$ is equivalent to $X \neq \hat{X}$.
We assume that the task variable \( S \) is binary and is related to $X$ through a binary symmetric model,
$S = X \oplus S_1$, where \( S \sim \text{Bern}(.) \) and \( S_1 \sim \text{Bern}(q_{S_1}) \) with $0 \leq q_{S_1} \leq \frac{1}{2}$.
We adopt the Hamming distortion $\Delta_H(X,\hat{X})=\mathbf{1}\{X\neq \hat{X}\}$, under which the expected distortion can be written as $\mathbb{E}[\Delta_H(X,\hat{X})] = P[X \neq \hat{X}] 
= P(X=0,\hat{X}=1) + P(X=1,\hat{X}=0)$.

\begin{figure}[]
\centering
\includegraphics[width=0.35\textwidth]{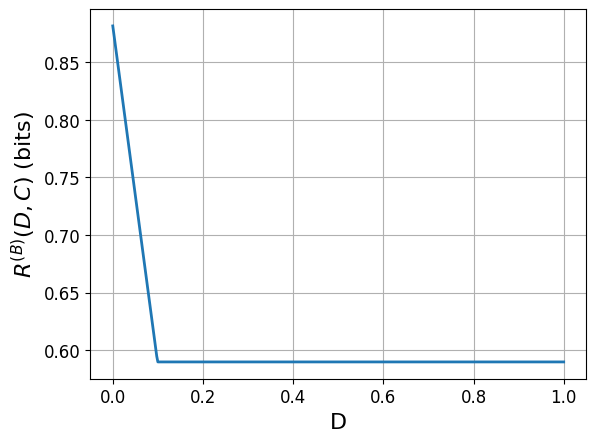}
\caption{Illustration of Theorem \ref{Oneshot_Bernoulli_radom_RDC}. $R^{(B)}(D,C)$ versus $D$ with given $C=0.8$, $q_X = 0.3$, $q_{S_1} = 0.2$.}
\label{RDC_CF}
\end{figure}

The next result gives an explicit single-letter expression for the one-shot RDC function in this Bernoulli setting when common randomness is available.

\begin{theorem}
\label{Oneshot_Bernoulli_radom_RDC}
Let $X \sim \text{Bern}(q_X)$ be a Bernoulli source and let $S$ be a binary task variable jointly distributed with $X$ via the binary symmetric model $S=X\oplus S_1$, where $S\sim \text{Bern}(q_S)$ and $S_1\sim \text{Bern}(q_{S_1})$ ($0 \leq q_X, q_S, q_{S_1}\leq\frac{1}{2}$).
Assume the Hamming distortion measure.
Then the optimization problem (\ref{prob:oneshot_random_simplified}) is feasible if $C\geq H_b(q_{S_1})$.
Moreover, under common randomness, \textcolor{black}{one-shot operational RDC is derived as}
\begin{align*}
\label{}
R_{ }^{(B)}(D, C) =  
\begin{cases} 
    \frac{H_b(q_X)(q_X - D)}{q_X}, \hfill 
    0 \leq D < \frac{q_X(C - H_b(q_{S_1})}{H_b(m) - H_b(q_{S_1})} \\
    \frac{H_b(q_X)[H_b(m) - C]}{H_b(m) - H_b(q_{S_1})} ,  
    \hfill \frac{q_X [C - H_b(q_{S_1})]}{H_b(m) - H_b(q_{S_1})} \leq D \leq 1 \\
    0, \hfill C \geq H_b\left(m\right) \text{ and } q_X < D \leq 1.
\end{cases}
\end{align*}
where $m = (1 - q_X)(1 - q_{S_1}) +  q_X q_{S_1}$ and $H_b(.)$ denotes the binary entropy function.
\end{theorem}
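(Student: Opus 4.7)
My plan is to invoke Theorem~\ref{thm:oneshot_random_simplified} to reduce the one-shot problem to a finite-dimensional optimization over mixtures of the four binary Boolean maps, and then to solve this program via a short case analysis of which of the two linear lower bounds on the rate is binding. Concretely, Theorem~\ref{thm:oneshot_random_simplified} lets me restrict attention to strategies $(U, f)$ with $X \perp U$ and $\hat{X} = f(X,U)$ deterministic for each $u$. Since both $X$ and $\hat{X}$ are binary, $f(\cdot, u)$ is one of the four Boolean functions on $\{0,1\}$ — constant $0$, constant $1$, identity, or negation — so every strategy is parameterized by a probability vector $(\alpha_0,\alpha_1,\alpha_2,\alpha_3)$ on these four maps with $\alpha_i \geq 0$ and $\sum_i \alpha_i = 1$. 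A direct computation then gives
\begin{align*}
H(\hat{X} | U) &= (\alpha_2 + \alpha_3)\, H_b(q_X), \\
\mathbb{E}[\Delta_H(X,\hat{X})] &= \alpha_0 q_X + \alpha_1(1-q_X) + \alpha_3, \\
H(S | \hat{X}, U) &= (\alpha_0+\alpha_1)\, H_b(m) + (\alpha_2+\alpha_3)\, H_b(q_{S_1}),
\end{align*}
using $S = X \oplus S_1$ with $S_1 \perp X$, so that $H(S) = H_b(m)$ when $\hat{X}$ is constant in $X$ and $H(S|X) = H_b(q_{S_1})$ when $\hat{X}$ is bijective in $X$.

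Next I set $t := \alpha_2 + \alpha_3$, so the rate is $tH_b(q_X)$ and the problem reduces to minimizing $t$. For each fixed $t$, the distortion is linear in the remaining $\alpha_i$'s, and its minimum on the simplex is attained by the Z-channel mixture $\alpha_0 = 1-t$, $\alpha_2 = t$, $\alpha_1 = \alpha_3 = 0$, giving distortion $(1-t)q_X$; hence the distortion constraint forces $t \geq (q_X - D)/q_X$ whenever $D \leq q_X$ and is otherwise slack. The classification constraint, combined with the conditioning inequality $H(S|\hat{X}) \geq H(S|\hat{X}, U)$, forces $t \geq (H_b(m) - C)/(H_b(m) - H_b(q_{S_1}))$ whenever $C < H_b(m)$; the feasibility requirement $C \geq H_b(q_{S_1})$ is the $t = 1$ limit of this bound. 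Taking the pointwise maximum of the two lower bounds on $t$ and multiplying by $H_b(q_X)$ then yields the three-case formula stated in the theorem, with the transition threshold $D^\ast = q_X(C - H_b(q_{S_1}))/(H_b(m) - H_b(q_{S_1}))$ obtained by equating them, and the degenerate Case 3 being the regime in which both bounds are vacuous.

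The main obstacle will be the classification constraint, which is genuinely non-linear in $\alpha$ when written directly through $H(S|\hat{X})$; replacing it by the linear surrogate $H(S|\hat{X}, U)$ only yields a lower bound on the rate. To close the gap and establish achievability, I would exhibit the Z-channel mixture $(1-t^\ast, 0, t^\ast, 0)$ with the appropriate $t^\ast$, directly evaluate $H(S|\hat{X})$ on the resulting joint distribution of $(X, \hat{X})$, and verify that the classification constraint holds with equality on the binding boundary of each case — in effect controlling the gap $I(S; U | \hat{X})$ at the candidate optimum. This final tightness check between the linear surrogate and the true classification quantity is where most of the technical effort will go.
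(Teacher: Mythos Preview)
Your reduction is exactly the paper's. The paper also enumerates the four Boolean maps (its $f_1=\mathrm{id}$, $f_2=\mathrm{neg}$, $f_3\equiv 0$, $f_4\equiv 1$ are your $\alpha_2,\alpha_3,\alpha_0,\alpha_1$), writes $H(\hat X\mid U)$, $\mathbb{E}[\Delta_H]$, and the classification term as affine functions of the weights, collapses to the single scalar $a=p_U(1)+p_U(2)$ (your $t$), argues $p_U(2)=p_U(4)=0$ at the optimum (your Z-channel mixture), and then does a case analysis on which of the two linear constraints binds. The threshold $D^\ast=q_X(C-H_b(q_{S_1}))/(H_b(m)-H_b(q_{S_1}))$ and the three cases match yours.

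Where your write-up is more careful than the paper is the distinction between $H(S\mid\hat X)$ and $H(S\mid\hat X,U)$. The paper simply asserts $H(S\mid\hat X)=\sum_u p_U(u)\,H(S\mid f_u(X))$ and plugs it into the LP; the right-hand side is of course $H(S\mid\hat X,U)$, so the paper is silently solving the surrogate problem you describe. Your observation that this only furnishes a \emph{lower} bound on the rate is correct.

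However, your proposed closure of the gap will not go through. For the Z-channel mixture $(\alpha_0,\alpha_1,\alpha_2,\alpha_3)=(1-t^\ast,0,t^\ast,0)$ with $t^\ast\in(0,1)$, the event $\{\hat X=0\}$ has positive probability under both $U$-branches, and conditioning on $\hat X=0$ the law of $S$ genuinely depends on $U$: it is $\mathrm{Bern}(1-m)$ on the constant-$0$ branch and $\mathrm{Bern}(q_{S_1})$ on the identity branch, and $1-m\neq q_{S_1}$ whenever $q_X\in(0,\tfrac12]$ and $q_{S_1}<\tfrac12$. Hence $I(S;U\mid\hat X)>0$ strictly, so at the value of $t^\ast$ for which $H(S\mid\hat X,U)=C$ you will compute $H(S\mid\hat X)>C$, and the verification you anticipate fails rather than succeeds. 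In short, the Z-channel scheme does \emph{not} meet the literal constraint $H(S\mid\hat X)\le C$ with equality at the binding boundary; neither your plan nor the paper's argument actually closes this achievability gap, and the stated formula is established only for the surrogate constraint $H(S\mid\hat X,U)\le C$.
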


\begin{proof}
A complete proof is deferred to Appendix~\ref{app:proof_Oneshot_Bernoulli_radom_RDC}.
\end{proof}

To benchmark the one-shot characterization above, we next recall the \textcolor{black}{information RDC function without common randomness} for the Bernoulli model from~\cite{Wang2024}.

\begin{theorem} 
\label{RDC_Wang2024}
\cite{Wang2024} Let $X \sim \text{Bern}(q_X)$ and let $S$ be a binary task variable jointly distributed with $X$ through $S=X\oplus S_1$, where $S\sim \text{Bern}(q_S)$ and $S_1\sim \text{Bern}(q_{S_1})$ ($0 \leq q_X, q_S, q_{S_1}\leq\frac{1}{2}$).
The problem is infeasible if $C<H_b(q_{S_1})$.
Otherwise, under Hamming distortion \textcolor{black}{and without common randomness}, the \textcolor{black}{information} RDC function is
\begin{align*}
\textcolor{black}{R^{}(D,C)} = \begin{cases}
H(b) - H(D),  \hfill D< C_0 \text{ and } D\leq b,\\
H(b) - H(C_0), \hfill D\geq C_0 \text{ and } C_0\leq b,\\
0, \hfill \min\{D,C_0\}> b.
\end{cases}
\end{align*}
where $b=\min\left\{\frac{q_X(1 - 2 q_{S_1})}{1-2q_{S_1}}, 1-\frac{q_X(1 - 2 q_{S_1})}{1-2q_{S_1}}\right\} $ and $C_0= \frac{H^{-1}(C)-q_{S_1}}{1-2q_{S_1}}$. Here $H^{-1}:[0,1]\rightarrow [0,\frac{1}{2}]$ denotes the inverse function of Shannon entropy.
\end{theorem}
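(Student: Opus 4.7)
The plan is to establish matching achievability and converse bounds, using a single-letter binary symmetric test channel on the achievability side and Fano's inequality together with Mrs.\ Gerber's Lemma on the converse side; the three regimes in the theorem will then correspond to which of the distortion, classification, or non-negativity constraint is active. Throughout, let $\phi(h):=H_b\!\bigl(H_b^{-1}(h)\ast q_{S_1}\bigr)$, where $\ast$ denotes binary convolution, so that Mrs.\ Gerber's Lemma reads $H(S\mid\hat X)\ge\phi\bigl(H(X\mid\hat X)\bigr)$ whenever $S=X\oplus S_1$ with $S_1\perp(X,\hat X)$; the convexity of $\phi$ is also part of its statement.

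For achievability, I would take a scalar test channel with $\hat X\sim\mathrm{Bern}(p)$ and $X=\hat X\oplus E$, $E\sim\mathrm{Bern}(\alpha)$ independent of $\hat X$, choosing $p=(q_X-\alpha)/(1-2\alpha)$ to restore the source marginal $X\sim\mathrm{Bern}(q_X)$. Setting $\alpha=\min\{D,C_0\}$ (when this is at most $q_X$), one gets $\mathbb{E}[\Delta_H]=\alpha\le D$, and since $S_1$ is independent of $\hat X$, $H(S\mid\hat X)=H(E\oplus S_1\mid\hat X)=H_b(\alpha\ast q_{S_1})\le H_b(C_0\ast q_{S_1})=C$ by monotonicity of binary convolution in its first argument. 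The standard i.i.d.\ covering argument then realizes the mutual information $I(X;\hat X)=H_b(q_X)-H_b(\alpha)$ as the asymptotic rate, which coincides with $H_b(q_X)-H_b(D)$ in Case~1 and with $H_b(q_X)-H_b(C_0)$ in Case~2. In Case~3, $\min\{D,C_0\}>q_X=b$, and the trivial reconstruction $\hat X\equiv 0$ yields $\mathbb{E}[\Delta_H]=q_X\le D$ and $H(S\mid\hat X)=H_b(q_X\ast q_{S_1})<H_b(C_0\ast q_{S_1})=C$, so rate $0$ is achievable.

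For the converse, I would start from $nR\ge H(Z\mid U)\ge I(X^n;Z\mid U)=I(X^n;Z,U)\ge I(X^n;\hat X^n)$, using $U\perp X^n$ and the Markov structure $X^n\to(Z,U)\to\hat X^n$. Writing $h_i:=H(X_i\mid\hat X_i)$ and $\bar h:=\frac{1}{n}\sum_{i=1}^n h_i$, the chain rule and conditioning reduction give $I(X^n;\hat X^n)\ge n[H_b(q_X)-\bar h]$. Fano's inequality then yields $h_i\le H_b(d_i)$ for $d_i:=P(X_i\neq\hat X_i)$, and concavity of $H_b$ together with $\sum d_i/n\le D$ gives $\bar h\le H_b(D)$. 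For the classification side, Mrs.\ Gerber's Lemma gives $H(S_i\mid\hat X_i)\ge\phi(h_i)$ letter-wise, and the convexity of $\phi$ plus Jensen yields $\phi(\bar h)\le\frac{1}{n}\sum_i\phi(h_i)\le\frac{1}{n}\sum_i H(S_i\mid\hat X_i)\le C$; inverting, $\bar h\le H_b(C_0)$. Combining, $\bar h\le H_b(\min\{D,C_0\})$, so $R\ge H_b(q_X)-H_b(\min\{D,C_0\})$, which matches Cases~1 and~2, while $R\ge 0$ supplies Case~3.

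The main obstacle I expect is the classification half of the converse: ensuring Mrs.\ Gerber's Lemma is applied with the correct roles (in particular, verifying that $S_1$ remains independent of $\hat X^n$ under the joint law induced by the encoder/decoder with common randomness), and using the convexity of $\phi$ to pass from the letter-wise MGL bound to a single-letter constraint on $\bar h$ via Jensen. Once this step is in place, the three-case structure in the statement follows from an elementary comparison of the resulting bounds $H_b(D)$, $H_b(C_0)$, and $H_b(q_X)$.
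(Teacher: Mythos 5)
The paper does not supply its own proof of this statement: Theorem~\ref{RDC_Wang2024} is cited verbatim from~\cite{Wang2024}, so there is no internal argument to compare against. Evaluating your proposal on its own terms: the approach (backward binary test channel $X=\hat X\oplus E$ for achievability; $nR\geq I(X^n;\hat X^n)$ followed by single-letterization, the binary-Fano bound $H(X_i\mid\hat X_i)\le H_b(d_i)$, and Mrs.\ Gerber's Lemma with Jensen on the convex map $\phi(h)=H_b(H_b^{-1}(h)\ast q_{S_1})$ for the converse) is the standard and correct route for this Bernoulli RDC result, and the algebra you carry out (in particular $C_0\ast q_{S_1}=H^{-1}(C)$, so $H_b(C_0\ast q_{S_1})=C$, and $b=q_X$) lines up with the stated three-case formula. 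Two small points worth tightening if you write this out in full: first, the step $\bar h\le H_b(D)$ uses monotonicity of $H_b$ and is only legitimate when $\tfrac1n\sum d_i\le D\le\tfrac12$, which is fine in the regime $D\le b=q_X\le\tfrac12$ where the distortion bound is active but should be stated; second, what you invoke as ``Fano'' is really the direct binary bound $H(X\mid\hat X)=H(X\oplus\hat X\mid\hat X)\le H_b(P(X\neq\hat X))$ (they coincide for binary alphabets, but the direct form avoids any ambiguity). The independence of $S_1$ from $(X^n,\hat X^n)$ that MGL requires does hold under the encoder/decoder/common-randomness model since all of that randomness is generated independently of $S_1$, as you flag.
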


\begin{figure}[]
\centering
\includegraphics[width=0.35\textwidth]{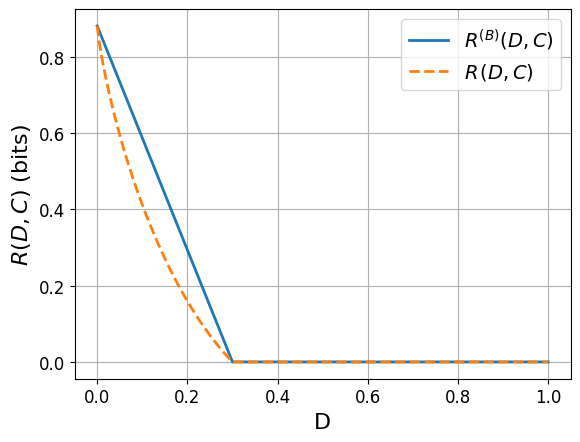}
\caption{RDC curves for a fixed \(C\): \(R^{(B)}(D,C)\) and \textcolor{black}{\(R^{}(D,C)\)} versus \(D\), for \(C=0.9\), \(q_X=0.3\), and \(q_{S_1}=0.2\).}
\label{fig:RDC_compare}
\end{figure}

Figure~\ref{fig:RDC_compare} illustrates the tradeoff curves, comparing the \textcolor{black}{information RDC function without common randomness} in \cite{Wang2024} and the one-shot RDC function in Theorem \ref{Oneshot_Bernoulli_radom_RDC} under the same bit-rate constraint. \textcolor{black}{As expected, the tradeoff curve of information RDC lies below its one-shot operational counterpart, reflecting the fundamental performance limit.}

We next turn to the dual viewpoint, in which the distortion is minimized subject to a rate constraint and a classification constraint.

\begin{definition}
\label{def:oneshot_random_simplified_DRC}
For a source \( X \sim p_X \) and an associated classification variable \( S \sim p_S \), the corresponding one-shot \textcolor{black}{operational} distortion minimization problem can be expressed as
\begin{mini!}|s|[2] 
{p_U, p_{\hat{X}|X,U}} 
{\mathbb{E}[\Delta(X, \hat{X})]}  
{\label{prob:oneshot_random_simplified_DRC}} 
{D^*(R, C) = } 
\addConstraint{\!\!\!\!\!\!\!\!\!\!\!\!\!\! H(\hat{X}|U,X) = 0, I(X,U) = 0 }{}  
\addConstraint{\!\!\!\!\!\!\!\!\!\!\!\!\!\! H(\hat{X}|U) \leq R, H(S | \hat{X}) \leq C.}{}  
\end{mini!}  
where $p_{U,X,\hat{X}}=p_U\,p_X\,p_{\hat{X}|U,X}$.
\end{definition}

\textcolor{black}{The following theorem characterizes $D^*(R, C)$ for the Bernoulli source.}

\begin{figure}[h]
\centering
\includegraphics[width=0.39\textwidth]{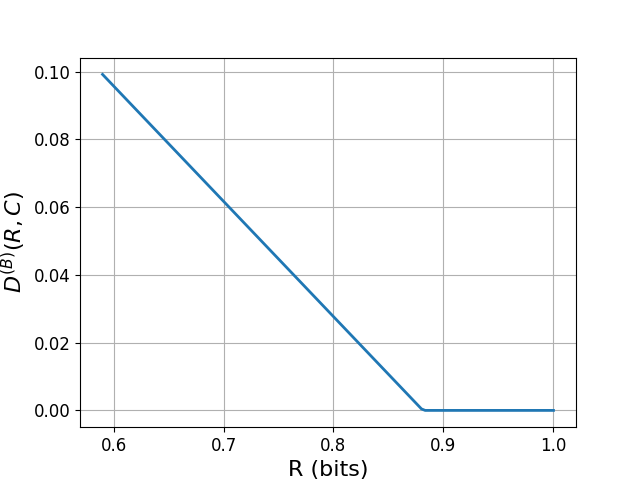}
\caption{Illustration of Theorem \ref{Oneshot_Bernoulli_radom_DRC}. $R^{(B)}(D,C)$ versus $D$ with given $C = 0.8$, $q_X = 0.3$, $q_{S_1} = 0.2$.}
\label{DRC_CF}
\end{figure}

\begin{theorem}
\label{Oneshot_Bernoulli_radom_DRC}
Let $X \sim Bern(q_X)$ be a Bernoulli source and let $S$ be a binary task variable jointly distributed with $X$ via $S=X\oplus S_1$, where $S\sim \text{Bern}(q_S)$ and $S_1\sim \text{Bern}(q_{S_1})$ ($0 \leq q_X, q_S, q_{S_1}\leq\frac{1}{2}$).
Assume the Hamming distortion measure.
Then the optimization problem (\ref{prob:oneshot_random_simplified_DRC}) is feasible if $C\geq H(q_{S_1})$.
Moreover, under common randomness, \textcolor{black}{one-shot operational DRC is derived as}
\begin{align*}
D_{ }^{(B)}(R, C) =  
\begin{cases} 
    \frac{q_X(H_b(q_X) - R)}{H_b(q_X)}, \\ 
    \hfill C > \frac{R (H_b(q_{S_1}) - H_b(m))}{H_b(q_X)}  + H_b(m) \\
    \frac{q_X(C - H_b(q_{S_1}))}{H_b(m) - H_b(q_{S_1})}, \\  
    \hfill H_b(q_{S_1}) \leq C \leq \frac{R (H_b(q_{S_1}) - H_b(m))}{H_b(q_X)}  + H_b(m) \\
    0, \hfill  C > H_b(q_S) \text{ and } R > H_b(q_X).
\end{cases}
\end{align*}
where $m = (1 - q_X)(1 - q_{S_1}) +  q_X q_{S_1}$ and $H_b(.)$ denotes the binary entropy function.
\end{theorem}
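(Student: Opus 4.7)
The plan is to derive $D^{(B)}(R,C)$ by inverting the closed-form RDC expression established in Theorem~\ref{Oneshot_Bernoulli_radom_RDC}. The DRC problem (\ref{prob:oneshot_random_simplified_DRC}) and the RDC problem (\ref{prob:oneshot_random_simplified}) optimize over the same class of joint laws $p_{U,X,\hat{X}}=p_U\,p_X\,p_{\hat{X}|X,U}$ with $X\perp U$ and $\hat{X}=f(X,U)$ deterministic; they differ only in which of the quantities $H(\hat{X}|U)$ and $\mathbb{E}[\Delta(X,\hat{X})]$ plays the role of objective versus constraint, while the classification bound $H(S|\hat{X})\le C$ appears identically in both. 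Consequently, at any fixed $C\ge H_b(q_{S_1})$ the DRC curve is the inverse of the RDC curve in the $(R,D)$ plane, i.e., $D^{(B)}(R,C)=\inf\{D\ge 0:\,R^{(B)}(D,C)\le R\}$, and feasibility transfers directly from Theorem~\ref{Oneshot_Bernoulli_radom_RDC}.

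Next I would work branch by branch on the piecewise-linear RDC curve. For $C\in[H_b(q_{S_1}),H_b(m)]$, Theorem~\ref{Oneshot_Bernoulli_radom_RDC} gives a strictly decreasing affine branch $R=H_b(q_X)(q_X-D)/q_X$ on $[0,D_C)$ followed by a constant branch at level $R_C=H_b(q_X)(H_b(m)-C)/(H_b(m)-H_b(q_{S_1}))$ on $[D_C,1]$, with corner $D_C=q_X(C-H_b(q_{S_1}))/(H_b(m)-H_b(q_{S_1}))$. Inverting the first branch yields $D=q_X(H_b(q_X)-R)/H_b(q_X)$, and elementary algebra converts the requirement $D<D_C$ into the equivalent condition $C>R(H_b(q_{S_1})-H_b(m))/H_b(q_X)+H_b(m)$ appearing in the first case of the statement. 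When instead $R\le R_C$, the rate is slack relative to the decreasing branch and the smallest distortion consistent with rate $\le R$ is pinned at the corner, giving $D=D_C$; rearranging $R\le R_C$ produces the complementary condition in the second case. Finally, for $R\ge H_b(q_X)$ one may take $\hat{X}=X$ to achieve $D=0$, yielding the third case; here I would note that $q_S=1-m$ and hence $H_b(q_S)=H_b(m)$, so the stated bound $C>H_b(q_S)$ is automatically compatible with the underlying feasibility requirement $C\ge H_b(q_{S_1})$.

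The main obstacle is justifying the inversion rigorously rather than formally: I must rule out that some admissible law achieves distortion strictly below the claimed $D^{(B)}(R,C)$ under the rate and classification constraints. This reduces to a clean monotonicity-by-contradiction argument: any such law would furnish a feasible point for (\ref{prob:oneshot_random_simplified}) at some $D'<D^{(B)}(R,C)$ attaining rate $\le R$ and classification $\le C$, contradicting the tightness of the lower bound $R^{(B)}(D',C)$ supplied by Theorem~\ref{Oneshot_Bernoulli_radom_RDC}. The converse thus follows. Conversely, the achievability schemes that attain each branch of the RDC curve can be reused verbatim to realize the three cases of the DRC expression, so the characterization is complete.
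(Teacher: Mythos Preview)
Your approach is genuinely different from the paper's. The paper works directly on the DRC problem: it carries over the four-mapping reduction from the RDC proof, argues that $p_U(2)^\star=p_U(4)^\star=0$, collapses everything to the scalar program $\min_{a\in[0,1]} q_X(1-a)$ subject to $H_b(q_X)\,a\le R$ and $a\,H_b(q_{S_1})+(1-a)H_b(m)\le C$, and then enumerates which of the rate and classification constraints is active. You instead invoke Theorem~\ref{Oneshot_Bernoulli_radom_RDC} and invert it via $D^{(B)}(R,C)=\inf\{D\ge 0:R^{(B)}(D,C)\le R\}$, justified by the observation that (\ref{prob:oneshot_random_simplified}) and (\ref{prob:oneshot_random_simplified_DRC}) optimize over the same family of admissible laws $p_{U,X,\hat{X}}$. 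Both routes are legitimate; yours is shorter and reuses the piecewise-linear algebra already carried out on the RDC side, while the paper's is self-contained and exhibits the optimizing distribution $p_U^\star$ explicitly in each regime.

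One step in your sketch is imprecise. In your second branch you assert that for $R\le R_C$ the smallest feasible distortion is ``pinned at the corner'' $D_C$. But the RDC curve of Theorem~\ref{Oneshot_Bernoulli_radom_RDC} never falls below its plateau $R_C=H_b(q_X)(H_b(m)-C)/(H_b(m)-H_b(q_{S_1}))$, so for $R<R_C$ the set $\{D:R^{(B)}(D,C)\le R\}$ is empty and your inversion yields infeasibility rather than $D_C$; the corner argument is valid only at the boundary $R=R_C$, i.e., at $C=H_b(m)+R(H_b(q_{S_1})-H_b(m))/H_b(q_X)$. The same feasibility restriction is visible in the paper's scalar program (the rate upper bound $a\le R/H_b(q_X)$ must lie at or above the classification lower bound on $a$), so this is a subtlety of the stated case conditions rather than a defect peculiar to your inversion method; nonetheless, you should make the restriction explicit rather than claim $D=D_C$ across the whole interval appearing in the second case.
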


\begin{proof}
The proof is provided in Appendix~\ref{app:proof_Oneshot_Bernoulli_radom_DRC}.
\end{proof}

\section{\textcolor{black}{Information} DC Achievable Region}
Given a representation $Z$ of $X$, its \textcolor{black}{information} DC achievable region, denoted by $\Pi(p_{Z|X})$, is defined as the set of all $(D, C)$ pairs for which there exists a decoder $p_{\hat{X}|Z}$ satisfying $\mathbb{E}[\Delta(X, \hat{X})] \leq D$ and $H(S|\hat{X}) \leq C$, where the variables satisfy the Markov chain $X \rightarrow Z \rightarrow \hat{X}$.

\begin{definition}[\textbf{Lower Boundary of \( \Pi(p_{Z|X}) \)}] 
The lower boundary of DC achievable
region \( \Pi(p_{Z|X}) \), under the Hamming distortion, is defined as
\begin{mini!}|s|[2] 
{p_{\hat{X}|Z}} 
{ P(X \neq \hat{X})} 
{\label{RDPC}} 
{D(C) = } 
\addConstraint{H(S | \hat{X})}{\leq C.} 
\end{mini!}
\end{definition}

We consider a binary source $X \sim \mathrm{Bern}(q_X)$ and a (possibly multi-level) representation random variable $Z$. 
Let $Z$ take values in $[n]$ with marginal distribution $p_Z(i)=q_i$ and conditional law $p_{X|Z}(1| i)=\epsilon_i$, for $i\in[n]$, 
where $\sum_{i=1}^n q_i=1$ and $\sum_{i=1}^n q_i\,\epsilon_i=q_X$. 
These conditions induce the source marginal
$p_X(0)=\sum_{i=1}^n q_i(1-\epsilon_i)$ and $p_X(1)=1-\sum_{i=1}^n q_i(1-\epsilon_i)$, 
and ensure the Markov structure $X \rightarrow Z \rightarrow \hat{X}$ for any reconstruction $\hat{X}$ generated from $Z$.

We parameterize the decoder by letting $p_{\hat{X}|Z}(0|i) = p_i, \quad i \in [n]$. Following the approach of \cite{Qian2022}, the admissible range of \(p_i\) depends on the value of \(1-\epsilon_i\):
if \(1-\epsilon_i \geq 0.5\), then \(1-\epsilon_i \leq p_i \leq 1\); whereas if \(1-\epsilon_i < 0.5\), then \(0 \leq p_i < 1-\epsilon_i\).

\begin{theorem}
\label{theo:lowerBoundary}
Consider a Bernoulli source \(X \sim \mathrm{Bern}(q_X)\) and a classification variable \(S\) that is jointly distributed with \(X\) through a binary symmetric model \(S = X \oplus S_1\), where \(S \sim \mathrm{Bern}(q_S)\) and \(S_1 \sim \mathrm{Bern}(q_{S_1})\), with \(0 \le q_X, q_S, q_{S_1} \le \tfrac{1}{2}\).
Let \(Z\) denote a representation of \(X\), and assume that the channel between \(X\) and \(\hat{X}\) is binary symmetric. Under the Hamming distortion measure, the lower boundary of \(\Pi(p_{Z|X})\) can be characterized by solving the following linear program.
\begin{subequations}
\label{prob:D_C_lower_boundary}
\begin{align}
    D(C) &= \min_{p_1, ..., p_n} \sum_{i=1}^n \Big[ q_i (1 - \epsilon_i) + q_i (2 \epsilon_i - 1) p_i \Big] \\
    \textnormal{s.t.} \quad & (1 - 2q_{S_1}) \Big(\sum_{i=1}^{n} q_i \epsilon_i p_i \Big) \nonumber \\
    &- (H^{-1}(C) - q_{S_1}) \Big(1 - \sum_{i=1}^{n} q_i (1 - \epsilon_i \Big) \leq 0, \\
    & 1 - \epsilon_i \leq p_i \leq 1 \quad \text{if} \quad 1 - \epsilon_i \geq 0.5, \\
    &  0 \leq p_i < 1 - \epsilon_i \quad \text{if} \quad 1 - \epsilon_i < 0.5. 
\end{align}
\end{subequations}
\end{theorem}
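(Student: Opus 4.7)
The plan is to convert the minimization of $D(C)=\min P(X\ne\hat{X})$ subject to $H(S\mid\hat{X})\le C$ into a linear program in the decoder parameters $p_i:=P(\hat{X}=0\mid Z=i)$, $i\in[n]$, exploiting the binary setting, the additive coupling $S=X\oplus S_1$, and the hypothesis that the channel $X\to\hat{X}$ is binary symmetric.

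Expanding the objective is routine. Under the Markov chain $X\to Z\to\hat{X}$ together with the parametrizations $p_Z(i)=q_i$ and $p_{X\mid Z}(1\mid i)=\epsilon_i$,
\begin{equation*}
P(X\ne\hat{X})=\sum_{i=1}^n q_i\bigl[(1-\epsilon_i)(1-p_i)+\epsilon_i p_i\bigr]=\sum_{i=1}^n q_i(1-\epsilon_i)+\sum_{i=1}^n q_i(2\epsilon_i-1)p_i,
\end{equation*}
which reproduces the linear objective in (a). This step does not use the BSC hypothesis.

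The main obstacle is to reduce the classification constraint $H(S\mid\hat{X})\le C$, which is a priori nonlinear in the $p_i$'s, to a single affine inequality. Using $S_1\perp(X,Z)$ together with $S=X\oplus S_1$, I will first establish the Markov chain $S\to Z\to\hat{X}$ with $P(S=1\mid Z=i)=q_{S_1}+\epsilon_i(1-2q_{S_1})$, which then gives
\begin{equation*}
P(S=1\mid\hat{X}=0)=q_{S_1}+(1-2q_{S_1})\frac{\sum_i q_i\epsilon_i p_i}{\sum_i q_i p_i}
\end{equation*}
and a parallel expression for $\hat{X}=1$. At this point the BSC hypothesis on $X\to\hat{X}$ is used to argue that the two conditional entropies $H(S\mid\hat{X}=\hat{x})$ coincide, so that $H(S\mid\hat{X})$ collapses to $H_b\bigl(P(S=1\mid\hat{X}=0)\bigr)$. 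Since $H^{-1}$ takes values in $[0,\tfrac12]$ and the relevant posterior lies in the same range, monotonicity of $H_b$ on $[0,\tfrac12]$ permits inversion, yielding $P(S=1\mid\hat{X}=0)\le H^{-1}(C)$; clearing the positive denominator then produces the affine inequality in (b). This entropy-to-posterior reduction is the only nonlinear step and is where I expect most of the work.

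Finally, the BSC structure additionally forces the crossover on $X\to\hat{X}$ not to exceed $\tfrac12$, equivalently $P(\hat{X}=X\mid Z=i)\ge P(\hat{X}\ne X\mid Z=i)$ for every $i$. Unfolding this inequality in terms of $\epsilon_i$ and $p_i$ gives $p_i\ge 1-\epsilon_i$ when $1-\epsilon_i\ge\tfrac12$ and $p_i<1-\epsilon_i$ when $1-\epsilon_i<\tfrac12$, which are exactly the box constraints (c) and (d). Assembling the linear objective, the single affine classification inequality, and the box constraints then yields the linear program in the theorem, with the entropy linearization remaining the key nonroutine ingredient.
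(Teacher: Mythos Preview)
Your expansion of the objective and your treatment of the box constraints (c)--(d) match the paper. The gap is in the classification constraint. When you clear the denominator in
\[
P(S=1\mid\hat{X}=0)=q_{S_1}+(1-2q_{S_1})\,\frac{\sum_i q_i\epsilon_i p_i}{\sum_i q_i p_i}\le H^{-1}(C),
\]
the right-hand side becomes $(H^{-1}(C)-q_{S_1})\sum_i q_i p_i=(H^{-1}(C)-q_{S_1})\,P(\hat{X}=0)$, which still depends on the decision variables $p_i$. This is \emph{not} constraint~(b): there the multiplier is the constant $1-\sum_i q_i(1-\epsilon_i)=P(X=1)=q_X$. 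So your route produces a different affine inequality and does not recover the stated linear program.

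The underlying issue is that your intermediate claim---that a forward BSC on $X\to\hat{X}$ forces $H(S\mid\hat{X}=0)=H(S\mid\hat{X}=1)$---fails for $q_X\ne\tfrac12$. Since $P(S=1\mid\hat{X}=h)=q_{S_1}+(1-2q_{S_1})P(X=1\mid\hat{X}=h)$, equality of the two conditional entropies would require \emph{backward} symmetry $P(X=1\mid\hat{X}=0)=P(X=0\mid\hat{X}=1)$, which does not follow from forward symmetry unless $X$ is uniform. The paper avoids this by working with the forward crossover itself: writing $\hat{X}=X\oplus S_2$ with $S_2\perp X$, it obtains $S=\hat{X}\oplus(S_1\oplus S_2)$ and evaluates $H(S\mid\hat{X})=H_b(q_{S_1}*\alpha)$ with $\alpha=P(\hat{X}=0\mid X=1)=\frac{\sum_i q_i\epsilon_i p_i}{P(X=1)}$. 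Because the denominator $P(X=1)=1-\sum_i q_i(1-\epsilon_i)$ is a constant independent of the $p_i$'s, inverting $H_b$ now yields exactly~(b). To match the theorem you should switch from the posterior $P(S\mid\hat{X})$ to the forward crossover $P(\hat{X}\mid X)$.
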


\begin{proof}
See Appendix~\ref{app:proof_lowerBoundary} for the proof.
\end{proof}

\textcolor{black}{The problem in Theorem \ref{theo:lowerBoundary}} can be efficiently solved using standard convex optimization tools, such as CVX for MATLAB~\cite{cvx,grant_boyd_2008_graph_impl} or CVXPY for Python~\cite{diamond2016cvxpy,agrawal2018rewriting}. The lower boundary of $\Pi(p_{Z|X})$ is illustrated in Fig.~\ref{fig:DC_lower_boundary}.

\begin{figure}[]
    \centering
    \subfloat[$n = 2, q_1 = q_2 = 0.5, \epsilon_1 = 0.2, \epsilon_2 = 0.8, q_{S_1} = 0.05$]{
        \includegraphics[width=0.75\linewidth]{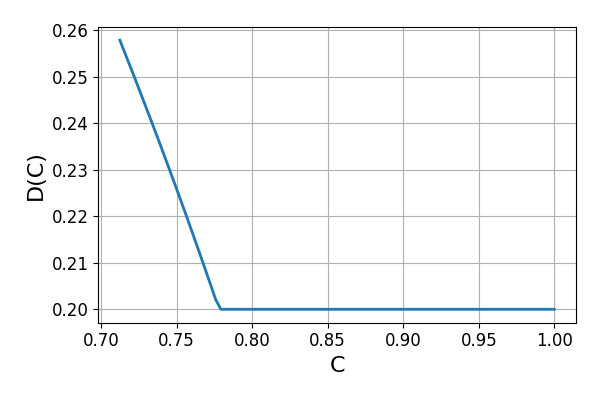}
        \label{fig:DC_Lower_Bound_1}
    }
    \hfill
    \subfloat[$n = 4, q_1 = 0.2, q_2 = 0.3, q_3 = 0.1, q_4 = 0.4,$
    $\epsilon_1 = 0.15, \epsilon_2 = 0.35, \epsilon_3 = 0.65, \epsilon_4 = 0.85, q_{S_1} = 0.1$]{
        \includegraphics[width=0.75\linewidth]{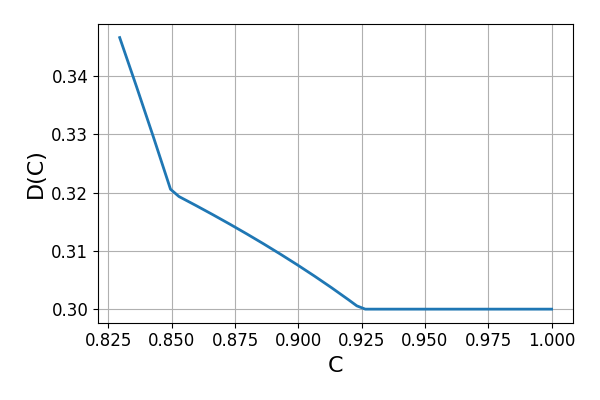}
        \label{fig:DC_Lower_Bound_2}
    }
    \caption{Lower boundary of achievable region,  $\Pi(p_{Z|X})$.}
    \label{fig:DC_lower_boundary}
\end{figure}

\begin{figure}[]
\centering
\includegraphics[width=0.4\textwidth]{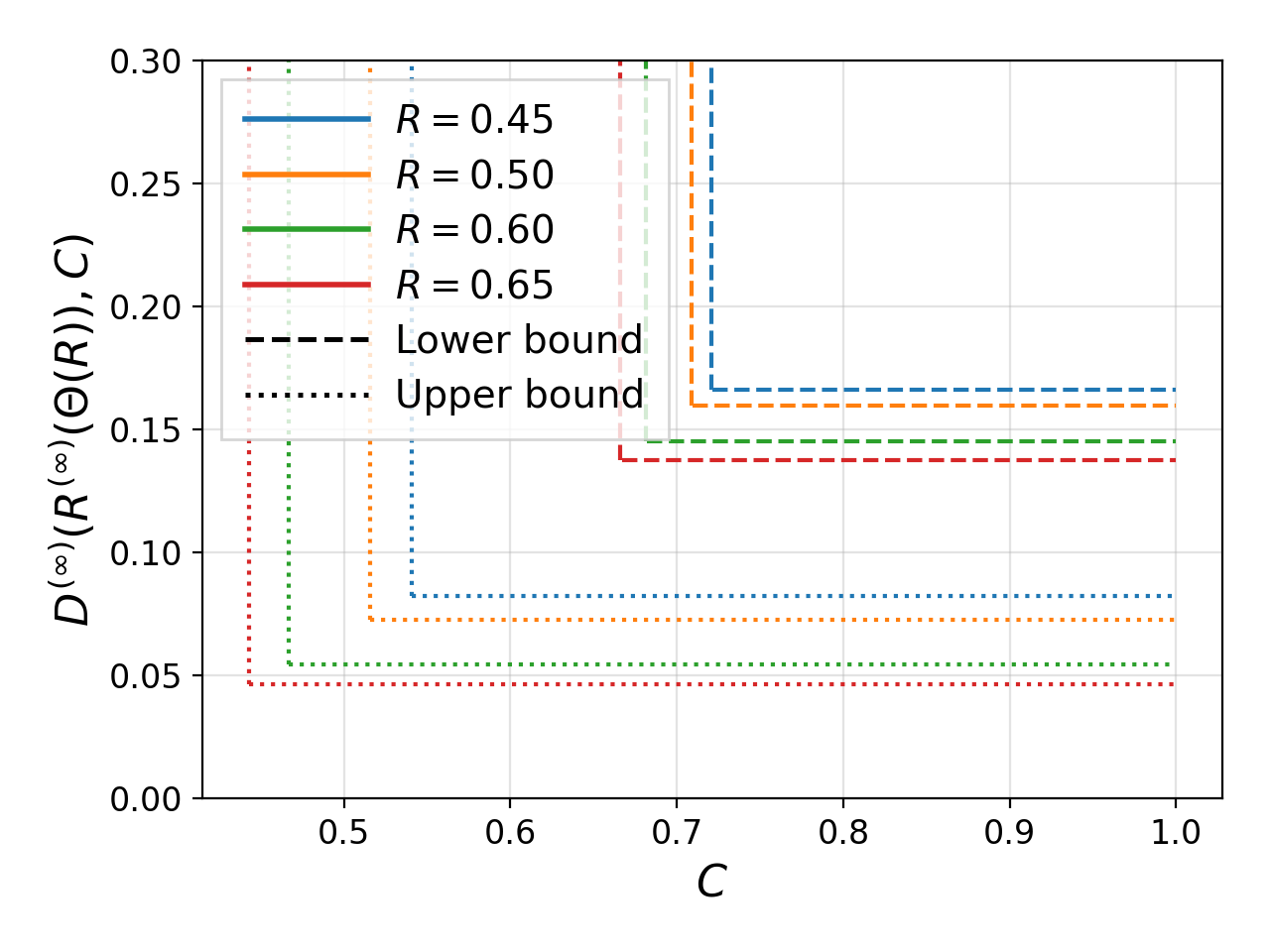}
\caption{DRC curves for a fixed \(R\): \(D^{(\infty)}\!\bigl(R^{(\infty)}_{LB}(\Theta(R)), C\bigr)\) versus \(C\) with $q_X = 0.2$, $q_{S_1} = 0.05$.}
\label{DCR_R_LB_UB}
\end{figure}

\section{Universal Encoder Representation}
We now introduce a representation-based formulation that decouples the encoder design from the specific DC operating point. This perspective allows a single encoder to support a family of objectives by pairing it with different decoders, as shown in Figure \ref{fig:Universal_Scheme} \cite{Nam2025}.

\begin{figure}[!htbp]
\centering
\includegraphics[width=0.45\textwidth]{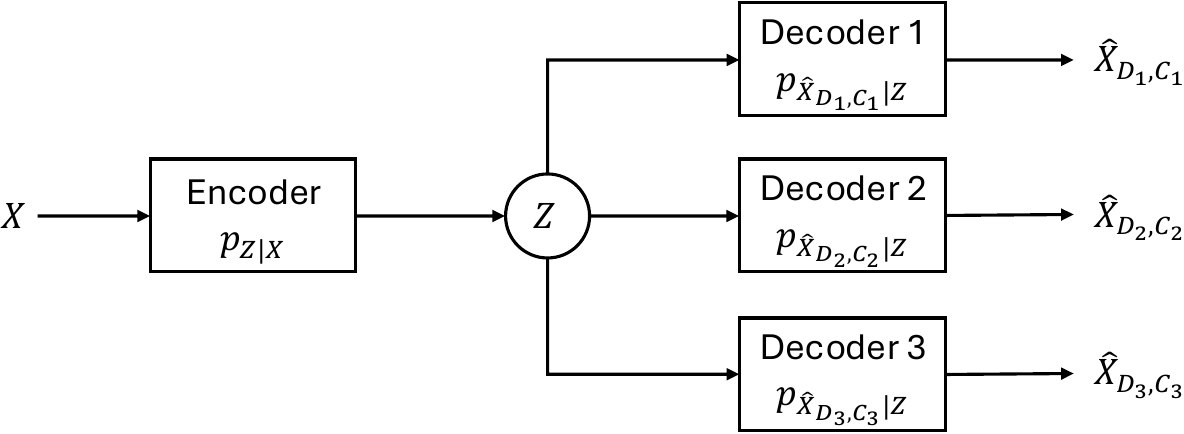}
\caption{The universal encoder representation framework.}
\label{fig:Universal_Scheme}
\end{figure}

\begin{definition}[\textbf{Universal \textcolor{black}{Information} RDC Function}]
\cite{Nam2025} Consider $Z$ be a representation of $X$ by the conditional distribution $p_{Z|X}$. Let $\mathcal{P}_{Z|X}(\Theta)$ denote the set of conditional distributions $p_{Z|X}$ such that, for every $(D, C) \in \Theta$, there exists a conditional distribution $p_{\hat{X}|Z}$ satisfying $\mathbb{E}[\Delta(X, \hat{X})] \leq D$ and $H(S|\hat{X}) \leq C$, where the joint distribution factors as $p_{X, Z, \hat{X}} = p_{X} p_{Z|X} p_{\hat{X}|Z}$, equivalently forming the Markov chain $X \rightarrow Z \rightarrow \hat{X}$. Define
\[
R(\Theta) = \inf_{p_{Z|X} \in \mathcal{P}_{Z|X}(\Theta)} I(X; Z).
\]
\end{definition}

This definition characterizes the minimum information rate required by a representation $Z$ that is sufficiently expressive to support all distortion-classification objectives in $\Theta$ through appropriate decoding. Given a representation $Z$ of $X$ with distortion-classification achievable region $\Pi(p_{Z|X})$, we say that $Z$ is a $\Theta$-universal representation of $X$ if $\Theta \subseteq \Pi(p_{Z|X})$.

\textcolor{black}{
The equality $R^{(\infty)}(\Theta) = R(\Theta)$ follows directly from Theorem~7 in~\cite{Wang2024}. In particular, the asymptotic source coding rate is fully characterized by $R(\Theta)$, which is defined through an optimization over auxiliary random variables $Z$ subject to the prescribed constraints. Accordingly, any random variable $Z$ jointly distributed with $X$ can be interpreted as a valid representation (or reconstruction) of the source.}

Under this representation-centric viewpoint, $R^{(\infty)}(\Theta)$ can be interpreted as the \textit{minimum rate} required by a single, fixed encoder to accommodate all DC requirements in $\Theta$.
Equivalently, it coincides with the infimum of $I(X;Z)$ over all $\Theta$-universal representations $Z$ of $X$.

\begin{definition}[\textbf{Rate Penalty}]
Let $\sup_{(D, C) \in \Theta} R^{(\infty)}(D, C)$ denote the rate required to satisfy the most demanding objective in $\Theta$.
The \emph{rate penalty} associated with universality is defined as
\[
\Delta(\Theta) = R^{(\infty)}(\Theta) - \sup_{(D, C) \in \Theta} R^{(\infty)}(D, C),
\]
which quantifies the additional rate needed to meet all objectives in $\Theta$ using a single encoder.
\end{definition}

The quantity $\Delta(\Theta)$ represents the minimum rate penalty incurred when using a $\Theta$-universal representation rather than tailoring an optimal representation to each individual DC pair.
A particularly important case arises when $\Theta = \Theta(R)$, where $\Theta(R) = \{(D, C): R^{(\infty)}(D, C) \leq R\}$. In this setting, $\Delta(\Theta(R))$ captures the fundamental cost of universality under a prescribed rate budget.

From Theorem \ref{RDC_Wang2024}, for a given rate $R$, define the two boundary parameters $C_0=C_0(R)$ and $C_{\min}=C_{\min}(R)$ as the solutions to
\begin{align}
\label{function_R}
\left\{\begin{matrix}
R =  H(b) - H(C_0), \\
R = H(b) - H( \frac{H^{-1}(C_{\min})- q_{S_1}}{1-2 q_{S_1}}),
\end{matrix}\right. 
\end{align}

Since \( \Theta(R) \) is achievable and \( (C_0, C = H(C_0 (1 - 2 q_{s_1})) + q_{s_1}) \) and \( (b, C_{\min}) \) lie on the boundary of this region, these points are achievable. The minimum rate \( R^{(\infty)}(\Theta(R)) \) is the solution of this problem:
\begin{mini*}|s|[2] 
{p_{\hat{X}_1, \hat{X}_2 | X}} 
{I(X; \hat{X}_1, \hat{X}_2)} 
{\label{RDPC}} 
{R^{(\infty)}(\Theta(R)) =} 
\addConstraint{\!\!\!\!\!\!\!\!\!\!\! P(X \neq \hat{X}_1)}{\leq C_0} 
\addConstraint{\!\!\!\!\!\!\!\!\!\!\! P(X \neq \hat{X}_2)}{\leq b} 
\addConstraint{\!\!\!\!\!\!\!\!\!\!\! H(S | \hat{X}_1)}{\leq H(C_0 (1 - 2 q_{s_1})) + q_{s_1})} 
\addConstraint{\!\!\!\!\!\!\!\!\!\!\! H(S | \hat{X}_2)}{\leq C_{\min}.} 
\end{mini*}

\begin{theorem}
\label{theo:ratePenaltyBound}
Assume the channel of $X$ and $\hat{X}_1$ is a binary symmetric channel. Under the Hamming distortion, the lower and upper bounds of \( R^{(\infty)}(\Theta(R)) \) can be derived by solving these following linear programming problems.
\begin{subequations}
\label{R_LB}
\begin{align}
    R^{(\infty)}_{LB}(\Theta(R)) &= \min_{p_{ij|k}} \quad I_{LB}(X; \hat{X}_1, \hat{X}_2) \\
    \textnormal{s.t.} \quad \!\!\! &(1 - q)(1 - p_{00|0} - p_{01|0}) \nonumber \\ 
    &+ q(p_{00|1} + p_{01|1}) \leq C_0, \\
    & (1 - q)(p_{01|0} + p_{11|0}) \nonumber \\
    &+ q(1 - p_{01|1} - p_{11|1}) \leq b, \\
    &  (1 - q_{s_1})(p_{10|0} + p_{11|0}) + q_{s_1} (1 - p_{10|0} - p_{11|0}) \nonumber \\
    & \leq C_0 (1 - 2 q_{s_1})) + q_{s_1}, \\
    & (1 - q_{s_1})(p_{01|0} + p_{11|0}) \nonumber \\
    &+ q_{s_1} (1 - p_{01|0} - p_{11|0}) \leq H^{-1}(C_{\min}), \\
    & p_{00|0} + p_{01|0} + p_{11|0} \leq 1, \\
    & p_{00|1} + p_{01|1} + p_{11|1} \leq 1, \\
    & 0 \leq p_{00|0}, 0 \leq p_{01|0}, 0 \leq p_{11|0} \leq 1, \\
    & 0 \leq p_{00|1}, 0 \leq p_{01|1}, 0 \leq p_{11|1} \leq 1.
\end{align}
\end{subequations}
And, 
\begin{subequations}
\label{R_UB}
\begin{align}
    R^{(\infty)}_{UB}(\Theta(R)) &= \min_{p_{ij|k}} \quad I_{LB}(X; \hat{X}_1, \hat{X}_2) \\
    \textnormal{s.t.} \quad \!\!\! &(1 - q)(1 - p_{00|0} - p_{01|0}) \nonumber \\ 
    &+ q(p_{00|1} + p_{01|1}) \leq C_0, \\
    & (1 - q)(p_{01|0} + p_{11|0}) \nonumber \\
    &+ q(1 - p_{01|1} - p_{11|1}) \leq b, \\
    &  (1 - q_{s_1})(p_{10|0} + p_{11|0}) + q_{s_1} (1 - p_{10|0} - p_{11|0}) \nonumber \\
    & \leq 0, \\
    & (1 - q_{s_1})(p_{01|0} + p_{11|0}) \nonumber \\
    &+ q_{s_1} (1 - p_{01|0} - p_{11|0}) \leq H^{-1}(C_{\min}), \\
    & p_{00|0} + p_{01|0} + p_{11|0} \leq 1, \\
    & p_{00|1} + p_{01|1} + p_{11|1} \leq 1, \\
    & 0 \leq p_{00|0}, 0 \leq p_{01|0}, 0 \leq p_{11|0} \leq 1, \\
    & 0 \leq p_{00|1}, 0 \leq p_{01|1}, 0 \leq p_{11|1} \leq 1.
\end{align}
\end{subequations}
where \(p_{ij|k}\), for \(i,j,k \in \{0,1\}\), denotes the conditional pmf \(p_{\hat{X}_1,\hat{X}_2 | X}(i,j | k)\).
\end{theorem}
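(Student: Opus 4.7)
The plan is to start from the mutual information minimization problem stated directly above Theorem~\ref{theo:ratePenaltyBound}, which expresses $R^{(\infty)}(\Theta(R))$ as an optimization over joint conditional distributions $p_{\hat{X}_1,\hat{X}_2\mid X}$ subject to distortion and classification bounds on each reconstruction. I would parameterize every admissible joint by the eight scalars $p_{ij|k}$ and rewrite each constraint as a linear inequality in these variables. The distortion constraints are immediate: because $X\sim\mathrm{Bern}(q)$ and the distortion is Hamming, $P(X\neq \hat{X}_1)$ and $P(X\neq \hat{X}_2)$ are affine in the $p_{ij|k}$, which recovers the first two inequalities of each LP (with right-hand sides $C_0$ and $b$ from \eqref{function_R}).

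Next I would translate each classification constraint $H(S\mid \hat{X}_i)\le C$ into a linear inequality. Writing $S=X\oplus S_1$ with $S_1$ independent of everything else, one checks that $P(S\neq \hat{X}_i\mid X=k)$ is affine in the marginal $P(\hat{X}_i=\cdot\mid X=k)$, so under the BSC assumption between $X$ and $\hat{X}_1$ (as hypothesized in the theorem) the induced channel from $S$ to $\hat{X}_1$ is itself binary symmetric. For a BSC, $H(S\mid \hat{X}_1)\le C$ is equivalent to $P(S\neq \hat{X}_1)\le H^{-1}(C)$, which is linear in the $p_{ij|k}$. An analogous reduction for $\hat{X}_2$, using the threshold $C_{\min}$, yields the fourth constraint in each LP.

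The objective $I(X;\hat{X}_1,\hat{X}_2)$ is not linear in $p_{ij|k}$, so the next step is to replace it by a linear minorant $I_{LB}$ on the feasible set, e.g.\ by decomposing $I(X;\hat{X}_1,\hat{X}_2)=H(X)-H(X\mid \hat{X}_1,\hat{X}_2)$ and bounding the conditional entropy above by an affine functional of the $p_{ij|k}$ via a Fano-type inequality on the backward channel from $(\hat{X}_1,\hat{X}_2)$ to $X$. Minimizing $I_{LB}$ over the feasible polytope defined above then yields $R^{(\infty)}_{LB}(\Theta(R))\le R^{(\infty)}(\Theta(R))$, establishing \eqref{R_LB}. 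For the upper bound LP \eqref{R_UB}, I would restrict to a structured sub-family of joints on which the objective coincides with $I_{LB}$; this is precisely what the tighter third constraint (right-hand side equal to $0$) enforces by pinning the $\hat{X}_1$-branch to a degenerate configuration. Since every distribution in this sub-family is feasible for the original problem, the optimum of the restricted LP is attained by an explicit representation and therefore upper bounds $R^{(\infty)}(\Theta(R))$.

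The main obstacle will be the non-linearity of mutual information as a function of $p_{\hat{X}_1,\hat{X}_2\mid X}$: constructing a minorant $I_{LB}$ that is both linear in the $p_{ij|k}$ and tight enough to give a meaningful lower bound, while simultaneously identifying the structural restriction under which $I=I_{LB}$ holds with equality (so that the upper bound LP is not vacuous). A secondary subtlety is justifying the BSC-preservation step in the classification constraint, which relies on the full symmetry of the noise $S_1$; I would argue that imposing the BSC structure between $X$ and $\hat{X}_1$ is consistent with the direction of each bound, so that the linearization loses nothing relevant to the optimization being analyzed.
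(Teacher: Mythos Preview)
Your constraint linearization matches the paper: parameterize by $p_{ij|k}$, the two Hamming constraints are affine, and the BSC assumption together with $S=X\oplus S_1$ turns each $H(S\mid\hat X_i)\le C$ into a linear bound on the relevant crossover probability. But the two steps that give the theorem its content diverge from your plan. First, $I_{LB}$ is \emph{not} a linear minorant obtained via a Fano-type inequality. The paper expands $I(X;\hat X_1,\hat X_2)$ as the KL-type sum over $(i,j)\in\{0,1\}^2$ and invokes the log-sum inequality to drop the $(i,j)=(1,0)$ pair of terms; the surviving expression (displayed explicitly as $I_{LB}$ in the appendix) is still a sum of $p\log(p/q)$ terms in the six free variables $p_{00|k},p_{01|k},p_{11|k}$. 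The ``linear programming'' label in the theorem refers to the constraint polytope, not the objective. A Fano-type bound on $H(X\mid\hat X_1,\hat X_2)$ would produce a different functional (and still not an affine one, since $H_b$ is concave), so following that route you would not recover the stated $R^{(\infty)}_{LB}(\Theta(R))$.

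Second, the zero right-hand side in the upper-bound program does not come from restricting to a sub-family on which $I=I_{LB}$. The paper's argument is geometric: convexity of $R^{(\infty)}(D,C)$ implies that the level curve bounding $\Theta(R)$ lies above its tangent line at $D=b$, so achieving the two endpoints of that tangent segment suffices for $\Theta(R)$-universality. Because $R^{(\infty)}(D,C)=H(b)-H(C_0)$ is constant in $D$ on $[C_0,b]$, the implicit-function slope $C'(b)$ vanishes, and hence $H^{-1}\!\big(C'(b)(C_0-b)\big)=0$; this is precisely the origin of the $0$ in the third constraint of \eqref{R_UB}. Your restriction argument would instead need to show both that the tighter constraint still defines a $\Theta(R)$-universal family \emph{and} that $I=I_{LB}$ holds on it; neither is established (for instance, the tightened constraint does not force $p_{10|1}=0$, so the dropped log-sum term need not vanish), and without the convexity/tangent step you cannot justify why the more stringent program upper-bounds $R^{(\infty)}(\Theta(R))$.
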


\begin{proof}
The proof is provided in Appendix~\ref{app:proof_ratePenaltyBound}.
\end{proof}

For a fixed rate $R$, compute $C_0$ and $C_{\min}$ according to~\eqref{function_R}. 
Next, solve the linear programs~\eqref{R_LB} and~\eqref{R_UB} to obtain 
$R^{(\infty)}_{LB}(\Theta(R))$ and $R^{(\infty)}_{UB}(\Theta(R))$, respectively. 
This allows us to plot the curves 
(i) $D^{(\infty)}(R^{(\infty)}_{LB}(\Theta(R)), C)$ versus $C$ and 
(ii) $D^{(\infty)}(R^{(\infty)}_{UB}(\Theta(R)), C)$ versus $C$, 
as shown in Fig.~\ref{DCR_R_LB_UB}.

\section{Conclusion}
We studied task-oriented lossy compression through the framework of RDC representations for Bernoulli sources under Hamming distortion. In the one-shot setting with common randomness, we derived closed-form characterizations of the RDC/DRC tradeoffs. With a representation-based viewpoint, we characterized the achievable DC region induced by a fixed representation via a linear-programming formulation. We further investigated universal encoders that can support a family of DC operating points and established computable lower and upper bounds on the minimum rate penalty required for universality across multiple task requirements. Numerical examples illustrate the achievable regions and the resulting universal RDC/DRC curves, providing an analytically tractable benchmark for understanding universality in task-oriented lossy compression.

\clearpage
\newpage
\bibliographystyle{unsrt}
\bibliography{main}

\clearpage
\newpage
\appendix

\subsection{Proof of Theorem \ref{thm:oneshot_random_simplified}}
\label{app:proof_oneshot_random_simplified}

We start from the one-shot RDC formulation with common randomness given in
Definition~\ref{def:RDC_single}:
\begin{mini*}|s|[2] 
{p_U, p_{Z|X,U}, p_{\hat{X}|Z,U}} 
{ H(Z|U) } 
{} 
{R^*(D, C) = } 
\addConstraint{\mathbb{E}[\Delta(X, \hat{X})]}{\leq D}  
\addConstraint{H(S | \hat{X})}{\leq C.}  
\end{mini*}  
where $p_{U,X,Z,\hat{X}}=p_U\,p_X\,p_{Z|X,U}\,p_{\hat{X}|Z,U}$.

\noindent\textbf{Upper bound.}
Fix any joint distribution $p_{U,X,\hat{X}}$ satisfying
$\mathbb{E}[\Delta(X, \hat{X})] \le D$ and $H(S|\hat{X})\le C$, and define
$Z\triangleq \hat{X}$.
With this choice, the induced joint distribution becomes
$p_{U,X,Z,\hat{X}}=p_U\,p_X\,\delta_{Z=\hat{X}(X,U)}\,\delta_{\hat{X}=Z}$.
Moreover, $H(M|U)=H(\hat{X}|U)$, while both
$\mathbb{E}[\Delta(X,\hat{X})]$ and $H(S|\hat{X})$ remain unchanged.
Consequently, we obtain
\begin{mini!}|s|[2] 
{p_U, p_{\hat{X}|X,U}} 
{ H(\hat{X}|U) } 
{} 
{R^*(D, C) \leq } 
\addConstraint{H(\hat{X}|U,X)}{= 0} 
\addConstraint{I(X,U)}{= 0} 
\addConstraint{\mathbb{E}[\Delta(X, \hat{X})]}{\leq D} 
\addConstraint{H(S | \hat{X})}{\leq C.} 
\end{mini!}

\noindent\textbf{Tightness.}
Conversely, consider any feasible joint distribution
$p_{U,X,Z,\hat{X}}$ satisfying
$\mathbb{E}[\Delta(X, \hat{X})] \le D$ and $H(S|\hat{X})\le C$.
By the functional representation lemma
\cite{elgamal2011network, li2018strong}, there exist:
(i) a random seed $V_1$, independent of $(U,X)$, and a measurable mapping
$\phi_1$ such that $Z=\phi_1(U,X,V_1)$ in distribution for $p_{Z|X,U}$;
(ii) a random seed $V_2$, independent of $(U,X,V_1)$, and a measurable mapping
$\phi_2$ such that $\hat{X}=\phi_2(U,Z,V_2)$ in distribution for
$p_{\hat{X}|Z,U}$.

Define the augmented random seed $U' \triangleq (U,V_1,V_2)$.
Then
$\hat{X}=\phi_2\big(U,\,\phi_1(U,X,V_1),\,V_2\big)$
is deterministic given $(U',X)$, implying $H(\hat{X}|U',X)=0$.
Furthermore, the induced joint distribution satisfies
$p_{U',X,Z,\hat{X}}=p_{U'}\,p_X\,p_{Z|X,U'}\,p_{\hat{X}|Z,U'}$.
Since the marginal distribution of $(X,\hat{X})$ is preserved,
both $\mathbb{E}[\Delta(X,\hat{X})]$ and $H(S|\hat{X})$ remain unchanged.

For the rate term, conditioning reduces entropy, and the deterministic
structure yields
\[
H(Z|U)\ \ge\ H(Z|U,V_1,V_2)=H(Z|U')\ \ge\ H(\hat{X}|U'),
\]
which leads to
\begin{mini!}|s|[2] 
{p_{U'}, p_{\hat{X}|X,U'}} 
{ H(\hat{X}|U') } 
{} 
{R^*(D, C) \geq } 
\addConstraint{H(\hat{X}|U',X)}{= 0} 
\addConstraint{I(X,U')}{= 0} 
\addConstraint{\mathbb{E}[\Delta(X, \hat{X})]}{\leq D} 
\addConstraint{H(S | \hat{X})}{\leq C.} 
\end{mini!}

Since the auxiliary alphabet is unrestricted, the random seed $U'$ can be relabelled as $U$ within the minimization without loss of generality.
Combining the upper and lower bounds completes the proof.

\subsection{Proof of Theorem \ref{Oneshot_Bernoulli_radom_RDC}}\label{app:proof_Oneshot_Bernoulli_radom_RDC}

We study (\ref{prob:oneshot_random_simplified}) with the Hamming distortion setting. In this case, the one-shot Bernoulli RDC function can be written as
\begin{subequations}
\begin{align}
    R_{ }^{(B)}(D, C) &= \inf_{p_U, p_{\hat{X}|X,U}} I(X;\hat{X}|U) \\
    \textnormal{s.t.} \quad & H(\hat{X}|U,X) = 0, \\
    & I(X,U) = 0, \\
    &  P(X \neq \hat{X}) \leq D, \\
    & H(S|\hat{X}) \leq C.
\end{align}
\end{subequations}

Note that $ H(\hat{X}|U) = I(X;\hat{X}|U) + H(\hat{X}|U,X) = I(X;\hat{X}|U) $. 
Moreover, $H(\hat{X}|U,X) = 0$ implies that $\hat{X}$ is a deterministic function of $(X,U)$, i.e., $\hat{X} = f(X,U)$ for some mapping $f$. Therefore, the optimization may be viewed as selecting a distribution $p_U$ over such deterministic mappings:
\begin{subequations}
\begin{align}
    R_{ }^{(B)}(D, C) &= \inf_{p_U} I(X;\hat{X}|U) \\
    \textnormal{s.t.} \quad & H(\hat{X}|U,X) = 0, \\
    & I(X,U) = 0, \\
    &  P(X \neq \hat{X}) \leq D, \\
    & H(S|\hat{X}) \leq C.
\end{align}   
\end{subequations}

Because Shannon entropy is defined only for discrete random variables, the auxiliary variable $U$ is chosen so that, for every realization $u$, the conditional distribution of $\hat{X}$ given $U=u$ is discrete (this requirement persists even when $(X,\hat{X})$ are otherwise modeled in a continuous space) \cite{liu2022lossy}. 

Accordingly, we may assume that $p_U$ is supported on the finite index set $\mathcal{U} \triangleq \{1,2,\ldots,|\mathcal{\hat{X}}|^{|\mathcal{X}|}\}$, where each $u \in \mathcal{U}$ corresponds to a deterministic mapping $f_u:\mathcal{X}\to\mathcal{\hat{X}}$, and the collection $\{f_u : u \in \mathcal{U}\}$ enumerates all distinct such mappings. By the support lemma (Appendix~C, p.~631 of \cite{elgamal2011network}), it suffices to assign positive probability to at most $|\mathcal{\hat{X}}|+1$ of these mappings.

In the finite-alphabet setting, this yields the equivalent finite-dimensional program
\begin{subequations}
\begin{align}
   R_{ }^{(B)}(D, C) &= \min_{p_U}  \sum_{u \in \mathcal{U}} p_U(u) I(X;\hat{X}|U = u)\\
    \textnormal{s.t.} \quad  & \sum_{u \in \mathcal{U}} p_U(u) \, P(X \neq \hat{X}| U = u)  \leq D, \\
    &\sum_{u \in \mathcal{U}} p_U(u) H(S|f_u(X)) \leq C.
\end{align}    
\end{subequations}

For a binary source with a binary reconstruction alphabet, the auxiliary alphabet can be restricted to at most four symbols without loss of optimality. This follows from the fact that there are exactly four distinct deterministic mappings from $\{0,1\}$ to $\{0,1\}$, namely
$f_1(x)=x$, $f_2(x)=1-x$, $f_3(x)=0$, and $f_4(x)=1$, for $x\in\{0,1\}$. As a result, the reconstruction $\hat{X}$ can be expressed as
\[
\hat{X} =
\begin{cases}
X, & U=1,\\
1-X, & U=2,\\
0, & U=3,\\
1, & U=4.
\end{cases}
\]

From this representation, we compute
\begin{align*}
I(X;\hat{X}|U) &= \sum_{u \in \mathcal{U}} p_U(u) I(X;\hat{X}|U = u) \\
&= \sum_{u \in \mathcal{U}} p_U(u) H(f_u(X)) \\
&= H_b(q_X)(p_U(1) + p_U(2)), \\
P(X \neq \hat{X}) &= \sum_{u \in \mathcal{U}} p_U(u) P(X \neq \hat{X}| U = u) \\
&= p_U(2) + q_X p_U(3) + (1 - q_X)p_U(4).
\end{align*}

Next, since $S \rightarrow X \rightarrow \hat{X}$, the data-processing inequality \cite{cover1999elements} implies
\begin{align*}
H(S|\hat{X}) \geq H(S|X) = H(X\oplus S_1|X) = H(S_1).
\end{align*}
Hence, feasibility of the classification constraint requires $C\geq H(S_1)$.

We now evaluate $H(S|\hat{X},U=u)$ for each mapping.  

For \( U = 1 \): \( \hat{X} = X \)
    \begin{align*}
    H(S | \hat{X}, U = 1) &= H(S | X) \\
    &= H(X \oplus S_1 | X) = H(S_1) = H_b(q_{S_1}).   
    \end{align*}

For \( U = 2 \): \( \hat{X} = 1 - X \)
    \begin{align*}
    H(S | \hat{X}, U = 2) = H(S | X) = H(S_1) = H_b(q_{S_1}).    
    \end{align*}

For \( U = 3 \): \( \hat{X} = 0 \)
    \begin{align*}
    S = X \oplus S_1  &\Rightarrow  P(S = 0) = (1 - q_X)(1 - q_{S_1}) + q_X q_{S_1}, \\
    H(S | \hat{X}, U = 3) &= H(S | U = 3) \\
    &= H_b((1 - q_X)(1 - q_{S_1}) + q_X q_{S_1}).
    \end{align*}

For \( U = 4 \): \( \hat{X} = 1 \)
    \begin{align*}
    H(S | \hat{X}, U = 4) &= H(S | U = 4) \\
    &= H_b((1 - q_X)(1 - q_{S_1}) + q_X q_{S_1}).    
    \end{align*}

Therefore, averaging over $U$ gives
\begin{align*}
H(S | \hat{X}) &= \sum_{u \in \mathcal{U}} p_U(u) H(S|f_u(X)) \\
&=  (p_U(1) + p_U(2)) H_b(q_{S_1}) \\
&+ (p_U(3) + p_U(4)) H_b((1 - q_X)(1 - q_{S_1}) +  q_X q_{S_1}),
\end{align*}
and with $m = (1 - q_X)(1 - q_{S_1}) +  q_X q_{S_1}$, this becomes
\begin{align*}
H(S | \hat{X}) &= (p_U(1) + p_U(2)) H_b(q_{S_1}) \\
&+ (p_U(3) + p_U(4)) H_b(m).
\end{align*}

Substituting the above expressions into the constraints yields the linear program
\begin{subequations}
\begin{align}
R_{ }^{(B)}(D, C) 
&= \!\!\!\!\!\!\!\!\!\!\! \min_{p_U(1), p_U(2), p_U(3),  p_U(4)} \!\!\!\!\!\!\!\!\!\!\! H_b(q_X)(p_U(1) + p_U(2))
\label{DRC_Bernoulli_random}\\
\textnormal{s.t.} \quad
& \! p_U(2) + q_X p_U(3) + (1 - q_X)p_U(4) \leq D, \label{rate_common_constraint_1_DRC} \\
& \! (p_U(1) + p_U(2)) H_b(q_{S_1}) \nonumber \\
& \! + (p_U(3) + p_U(4)) H_b(m) \leq C \label{rate_common_constraint_3_DRC}\\
&\! p_U(1) + p_U(2) + p_U(3) + p_U(4) = 1, \label{rate_common_constraint_4_DRC}\\
&\! p_U(1),\, p_U(2),\, p_U(3),\, p_U(4) \geq 0 \label{rate_common_constraint_5_DRC}.
\end{align}    
\end{subequations}

\textbf{Structure and activity of the nonnegativity constraints.}
Let $a\triangleq p_U(1)+ p_U(2)$ so that $ p_U(3)+ p_U(4)=1-a$. Write $ p_U(3)=t\in[0,1-a]$, $ p_U(4)=(1-a)-t$. Then
\begin{align*}
\mathbb{E}[d_H(X,\hat{X})] &= p_U(2)+q_X  p_U(3)+(1-q_X)  p_U(4) \\
&=  p_U(2) + (2q_X-1)t + (1-q_X)(1-a).    
\end{align*}

Since $0\le q_X\le \tfrac12$, we have $2q_X-1\le 0$. Thus, for any fixed $a$, the distortion is minimized by choosing $t=1-a$ (equivalently, $p_U(3)=1-a$ and $p_U(4)=0$). It follows that
\begin{align*}
&\min_{ p_U(3)+  p_U(4)=1-a} \bigl[ p_U(2) +q_X  p_U(3)+(1-q_X)  p_U(4)\bigr] \\
& = \min_{1-a} p_U(2) + q_X(1-a).   
\end{align*}

Moreover, the objective depends on $(p_U(1)+p_U(2))$ and is increasing in $p_U(2)$, while $p_U(2)$ also increases the distortion. Hence, an optimal solution must satisfy $ p_U(2)^\star=0$.
Therefore, the problem reduces to a single scalar variable $a=p_U(1)\in[0,1)$:
\begin{mini!}|s|[2] 
{a} 
{ H_b(q_X)a } 
{\label{prob:DRC_Bernoulli_random_one_var}} 
{R_{ }^{(B)}(D, C) = } 
\addConstraint{\!\! q_X(1-a)\le D}{\label{eq:dist-reduced-qx}} 
\addConstraint{\!\! aH_b(q_{S_1}) + (1-a)H_b(m) \le C}{\label{eq:class-reduced}} 
\addConstraint{\!\! a \geq 0.}{} 
\end{mini!}
and the corresponding optimizer in the original variables is $(p_U^\star(1),p_U^\star(2),p_U^\star(3),p_U^\star(4))=(a^\star,\,0,\,1-a^\star,\,0)$.

\textbf{Auxiliary fact used in the case analysis.}
We next record a simple monotonicity property that will be used repeatedly.

\begin{lemma}\label{lem:entropy_order}
Let $m=(1-q_X)(1-q_{S_1})+q_X q_{S_1}$. Then
\[
H_b(m)\;\ge\; H_b(q_{S_1}),
\]
with equality if only if $q_X\in\{0,1\}$ or $q_{S_1}=\tfrac12$.
\end{lemma}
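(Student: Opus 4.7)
The plan is to interpret $m$ and $q_{S_1}$ as probabilities of discrete random variables so that the claimed inequality reduces to a one-line information-theoretic comparison. By construction, $m = (1-q_X)(1-q_{S_1}) + q_X q_{S_1} = P(S=0)$ for $S = X \oplus S_1$ with $X \sim \mathrm{Bern}(q_X)$ and $S_1 \sim \mathrm{Bern}(q_{S_1})$ independent. Since the binary entropy is symmetric about $\tfrac{1}{2}$, we have $H_b(m) = H(S)$, and trivially $H_b(q_{S_1}) = H(S_1)$. The goal is therefore to show $H(S) \geq H(S_1)$.

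First, I would condition on $X$ and use independence: $H(S \mid X) = H(X \oplus S_1 \mid X) = H(S_1 \mid X) = H(S_1)$. Then, by the standard fact that conditioning cannot increase entropy, $H(S) \geq H(S \mid X) = H(S_1)$, which is exactly the claimed inequality $H_b(m) \geq H_b(q_{S_1})$.

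For the equality condition, I would recall that equality in ``conditioning reduces entropy'' is equivalent to $I(X;S) = 0$, i.e., $X \perp S$. A direct check of $P(X=0, S=0) = P(X=0)\,P(S=0)$ reduces, after cancellation of $(1-q_X)$, to the algebraic identity $q_X(1 - 2q_{S_1}) = 0$; the symmetric case $X=1$ yields the same condition. This isolates the three equality cases $q_X \in \{0,1\}$ or $q_{S_1} = \tfrac{1}{2}$ stated in the lemma.

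No step is a real obstacle; the proof is essentially a one-liner from the chain rule together with independence of $X$ and $S_1$. The only minor point to handle carefully is the symmetry $H_b(m) = H_b(1-m)$, which guarantees that identifying $H_b(m)$ with $H(S)$ is valid regardless of whether one parameterizes the Bernoulli law of $S$ by $m$ or by $1-m$.
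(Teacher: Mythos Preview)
Your argument is correct and complete: identifying $H_b(m)=H(S)$ with $S=X\oplus S_1$, computing $H(S\mid X)=H(S_1)$ from independence, and then invoking ``conditioning reduces entropy'' yields the inequality; the equality analysis via $I(X;S)=0$ is also fine (the two checks at $X=0$ and $X=1$ together recover all three boundary cases, since the cancellations of $(1-q_X)$ and of $q_X$ respectively pick up $q_X=1$ and $q_X=0$).

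This is a genuinely different route from the paper. The paper proceeds purely algebraically: it rewrites $m-\tfrac12=(q_X-\tfrac12)(2q_{S_1}-1)$, bounds $|m-\tfrac12|\le |q_{S_1}-\tfrac12|$ using $|q_X-\tfrac12|\le\tfrac12$, and then appeals to the fact that $H_b$ is strictly decreasing in $|p-\tfrac12|$. That approach is self-contained (no probabilistic interpretation needed) and makes the equality cases immediate from when $2|q_X-\tfrac12|=1$ or $|q_{S_1}-\tfrac12|=0$. Your approach, by contrast, is conceptual: it is the standard ``XOR with an independent bit cannot decrease entropy'' statement, which generalizes beyond the binary setting and explains \emph{why} the inequality holds rather than just verifying it. The cost is that the equality condition requires a short separate computation, whereas in the paper it falls out of the same line.
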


\begin{proof}[Proof of Lemma \ref{lem:entropy_order}]
The identity for $m$ is immediate:
\begin{align*}
&m = (1-q_X)(1-q_{S_1}) + q_X q_{S_1} = \tfrac{1}{2} + \big(q_X - \tfrac{1}{2}\big)\big(2q_{S_1}-1\big),\\
&m-\tfrac12 \; =\; \bigl(q_X-\tfrac12\bigr)\bigl(2q_{S_1}-1\bigr).
\end{align*}

Taking absolute values and using $|2q_{S_1}-1|=2|q_{S_1}-\tfrac12|$ gives
\[
\bigl|m-\tfrac12\bigr|
\;=\; 2\,\bigl|q_X-\tfrac12\bigr|\,\bigl|q_{S_1}-\tfrac12\bigr|
\;\le\; \bigl|q_{S_1}-\tfrac12\bigr|
\]
because $|q_X-\tfrac12|\le \tfrac12$ for $q_X\in[0,1]$. The binary entropy is
maximized at $\tfrac12$ and strictly decreases with $|p-\tfrac12|$, hence
$H_b(m)\ge H_b(q_{S_1})$, with equality iff $|q_X-\tfrac12|=\tfrac12$ (i.e., $q_X\in\{0,1\}$) or
$|q_{S_1}-\tfrac12|=0$ (i.e., $q_{S_1}=\tfrac12$).
\end{proof}

We will invoke Lemma~\ref{lem:entropy_order} 
in the analysis below to justify the sign of denominators of the form $H_b(m)-H_b(q_{S_1})$.

Since (\ref{prob:DRC_Bernoulli_random_one_var}) is a linear programming problem, its optimum is attained at an extreme point and may be characterized by examining which constraints are active. Below we perform a complete case analysis by considering all possible activity patterns of (\ref{eq:dist-reduced-qx}) and (\ref{eq:class-reduced}).

\textbf{Case 1.} Constraint~(\ref{eq:dist-reduced-qx}) is active and constraint~(\ref{eq:class-reduced}) is inactive.

The constraint~(\ref{eq:dist-reduced-qx}) is active, we get
\begin{align*}
q_X(1-a) = D \Rightarrow a = \frac{q_X - D}{q_X}.
\end{align*}

As a consequence, we have
\[
R_{ }^{(B)}(D, C) =  \frac{H_b(q_X)(q_X - D)}{q_X}.
\]

And,
\begin{align*}
p_U^\star(1) &= \frac{q_X - D}{q_X},\\
p_U^\star(2) &= 0,\\
p_U^\star(3) &= \frac{D}{q_X},\\
p_U^\star(4) &= 0.
\end{align*}

The constraint~(\ref{eq:class-reduced}) is inactive if
\begin{align*}
&a\,H_b(q_{S_1}) + (1-a)H_b(m) < C, \\
&\frac{q_X - D}{q_X} \,H_b(q_{S_1}) + \left(1-\frac{q_X - D}{q_X} \right)H_b(m) < C, \\
& D < \frac{q_X(C - H_b(q_{S_1})}{H_b(m) - H_b(q_{S_1})}.
\end{align*}

Therefore, $
R_{ }^{(B)}(D, C) =  \frac{H_b(q_X)(q_X - D)}{q_X} 
$ if $D < \frac{q_X(C - H_b(q_{S_1})}{H_b(m) - H_b(q_{S_1})}$.

\textbf{Case 2.} Constraint~(\ref{eq:class-reduced}) is active and constraint~(\ref{eq:dist-reduced-qx}) is inactive.

The constraint~(\ref{eq:class-reduced}) is active if
\begin{align*}
&a\,H_b(q_{S_1}) + (1-a)H_b(m) = C, \\
& a = \frac{H_b(m) - C}{H_b(m) - H_b(q_{S_1})}.
\end{align*}

As a consequence, we have
\[
R_{ }^{(B)}(D, C) = \frac{H_b(q_X)[H_b(m) - C]}{H_b(m) - H_b(q_{S_1})}.
\]

And,
\begin{align*}
p_U^\star(1) &= \frac{H_b(m) - C}{H_b(m) - H_b(q_{S_1})},\\
p_U^\star(2) &= 0,\\
p_U^\star(3) &= \frac{C - H_b(q_{S_1})}{H_b(m) - H_b(q_{S_1})},\\
p_U^\star(4) &= 0.
\end{align*}

The constraint~(\ref{eq:dist-reduced-qx}) is inactive if
\begin{align*}
&q_X(1-a) < D,\\
&D > \frac{q_X [C - H_b(q_{S_1})]}{H_b(m) - H_b(q_{S_1})}.
\end{align*}

Therefore, $
R_{ }^{(B)}(D, C) = \frac{H_b(q_X)[H_b(m) - C]}{H_b(m) - H_b(q_{S_1})} 
$ if $D > \frac{q_X [C - H_b(q_{S_1})]}{H_b(m) - H_b(q_{S_1})}$.

\textbf{Case 3:} Both the rate constraint~(\ref{eq:dist-reduced-qx}) and the classification constraint~\eqref{eq:class-reduced} are active.

From case 2, we know that the classification constraint~\eqref{eq:class-reduced} is active if 
\begin{align*}
a = \frac{H_b(m) - C}{H_b(m) - H_b(q_{S_1})}.
\end{align*}
And
\[
R_{ }^{(B)}(D, C) = \frac{H_b(q_X)[H_b(m) - C]}{H_b(m) - H_b(q_{S_1})}.
\]

The rate constraint~(\ref{eq:dist-reduced-qx}) is active if 
\begin{align*}
   D = \frac{q_X [C - H_b(q_{S_1})]}{H_b(m) - H_b(q_{S_1})}.
\end{align*}

Therefore, $R_{ }^{(B)}(D, C) = \frac{H_b(q_X)[H_b(m) - C]}{H_b(m) - H_b(q_{S_1})}$ if $\frac{q_X [C - H_b(q_{S_1})]}{H_b(m) - H_b(q_{S_1})}$.

\textbf{Case 4.}  Both constraint \eqref{eq:dist-reduced-qx} and constraint \eqref{eq:class-reduced} are inactive.

We observe that the rate achieves its theoretical minimum when $R_{ }^{(B)}(D, C) = H_b(q_X)\left(p_U(1) + p_U(2)\right) = 0$, which implies \( p_U(1) = p_U(2) = 0 \). Hence, we obtain $p_U(3) = 1, \quad p_U(4) = 0$.

Then, using constraints (\ref{eq:dist-reduced-qx}) and (\ref{eq:class-reduced}), we find that the feasibility of this configuration requires $D \geq q_X$ and $C \geq H_b\left(m\right)$. Therefore, the minimum achievable rate is zero, i.e., $R_{ }^{(B)}(D, C) = 0$, if and only if the distortion and classification loss exceed the respective thresholds: $D \geq q_X$ and $ C \geq H_b\left(m\right)$.

Collecting the expressions obtained across the four cases yields the stated closed-form characterization of $R_{ }^{(B)}(D, C)$, which completes the proof.

\subsection{Proof of Theorem \ref{Oneshot_Bernoulli_radom_DRC}}\label{app:proof_Oneshot_Bernoulli_radom_DRC}

We proceed in the same idea as the proof of Theorem~\ref{Oneshot_Bernoulli_radom_RDC}. In particular, the Bernoulli DRC problem under Hamming distortion admits the finite-dimensional reduction
\begin{subequations}
\begin{align}
D_{ }^{(B)}(R, C) 
&= \!\!\!\!\!\!\!\!\!\!\!\!\!\! \min_{p_U(1), p_U(2), p_U(3), p_U(4)} \!\!\!\!\!\!\!\!\!\!\!\!\!\! p_U(2) + q_X p_U(3) + (1 - q_X)p_U(4) 
\label{RDC_Bernoulli_random}\\
\textnormal{s.t.} \quad 
& H_b(q_X)(p_U(1) + p_U(2))  \leq R, \label{rate_common_constraint_1_RDC} \\
&(p_U(1) + p_U(2)) H_b(q_{S_1}) \nonumber \\
&+ (p_U(3) + p_U(4)) H_b(m) \leq C, \label{rate_common_constraint_3_RDC}\\
&p_U(1) + p_U(2) + p_U(3) + p_U(4) = 1, \label{rate_common_constraint_4_RDC}\\
&p_U(1),\, p_U(2),\, p_U(3),\, p_U(4) \geq 0 \label{rate_common_constraint_5_RDC}.
\end{align}   
\end{subequations}

\textbf{Structure and activity of the nonnegativity constraints.}
Let $a\triangleq p_U(1)+ p_U(2)$ so that $ p_U(3)+ p_U(4)=1-a$. Write $ p_U(3)=t\in[0,1-a]$, $ p_U(4)=(1-a)-t$. Then
\begin{align*}
\mathbb{E}[d_H(X,\hat{X})] &= p_U(2)+q_X  p_U(3)+(1-q_X)  p_U(4) \\
&=  p_U(2) + (2q_X-1)t + (1-q_X)(1-a).   
\end{align*}

Since $0\le q_X\le \tfrac12$, we have $2q_X-1\le 0$; hence the distortion $\mathbb{E}[d_H(X,\hat{X})]$ is minimized by taking $t=1-a$ (i.e., $p_3=1-a$, $p_4=0$). Thus, for fixed $a$,
\[
\min_{ p_U(3)+  p_U(4)=1-a}\;\bigl[ p_U(2) +q_X  p_U(3)+(1-q_X)  p_U(4)\bigr] =  p_U(2) + q_X(1-a).
\]
Because the objective increases with $ p_U(2)$ and $ p_U(2)$ only worsens the objective function, we must set $ p_U(2)^\star=0$.

Therefore, the optimization problem collapses to a single scalar $a=p_U(1)\in[0,1)$:
\begin{align}
\label{prob:RDC_Bernoulli_random_one_var}
D_{ }^{(B)}(R, C) = \min_{a}\;&  q_X(1-a)\\
\textnormal{s.t.}\;& H_b(q_X)\,a \le R, \label{dist-reduced-qx}\\
& a\,H_b(q_{S_1}) + (1-a)H_b(m) \le C, \label{class-reduced}\\
& a \geq 0.
\end{align}
and the corresponding optimizer in the original variables is $(p_U^\star(1),p_U^\star(2),p_U^\star(3),p_U^\star(4))=(a^\star,\,0,\,1-a^\star,\,0)$.

Problem~(\ref{prob:RDC_Bernoulli_random_one_var}) is a linear program in the scalar variable $a$, so the optimum is attained at a boundary point determined by which constraints are active. We next enumerate all possible activity patterns of the rate constraint~(\ref{dist-reduced-qx}) and the classification constraint~(\ref{class-reduced}) and derive the corresponding closed forms.

\textbf{Case 1.} Constraint~(\ref{dist-reduced-qx}) is active and constraint~(\ref{class-reduced}) is inactive.

The constraint~(\ref{dist-reduced-qx}) is active, we get
\begin{align*}
H_b(q_X)\,a = R \Rightarrow a = \frac{R}{H_b(q_X)}.
\end{align*}

As a consequence, we have
\[
D_{ }^{(B)}(R, C) =  \frac{q_X(H_b(q_X) - R)}{H_b(q_X)}.
\]

And,
\begin{align*}
p_U^\star(1) &= \frac{R}{H_b(q_X)},\\
p_U^\star(2) &= 0,\\
p_U^\star(3) &= \frac{H_b(q_X) - R}{H_b(q_X)},\\
p_U^\star(4) &= 0.
\end{align*}

The constraint~(\ref{class-reduced}) is inactive if
\begin{align*}
&a\,H_b(q_{S_1}) + (1-a)H_b(m) < C, \\
&\frac{R}{H_b(q_X)} H_b(q_{S_1}) + \left(1 - \frac{R}{H_b(q_X)} \right)H_b(m) < C, \\
& C > \frac{R (H_b(q_{S_1}) - H_b(m))}{H_b(q_X)}  + H_b(m).
\end{align*}

Therefore, $
D_{ }^{(B)}(R, C) =  \frac{q_X(H_b(q_X) - R)}{H_b(q_X)} 
$ if $C > \frac{R (H_b(q_{S_1}) - H_b(m))}{H_b(q_X)}  + H_b(m)$.

\textbf{Case 2.} Constraint~(\ref{class-reduced}) is active and constraint~(\ref{dist-reduced-qx}) is inactive.

The constraint~(\ref{class-reduced}) is active if
\begin{align*}
&a\,H_b(q_{S_1}) + (1-a)H_b(m) = C, \\
& a = \frac{C - H_b(m)}{H_b(q_{S_1}) - H_b(m)}.
\end{align*}

As a consequence, we have
\[
D_{ }^{(B)}(R, C) = \frac{q_X(H_b(q_{S_1}) - C)}{H_b(q_{S_1}) - H_b(m)}.
\]

And,
\begin{align*}
p_U^\star(1) &= \frac{C - H_b(m)}{H_b(q_{S_1}) - H_b(m)},\\
p_U^\star(2) &= 0,\\
p_U^\star(3) &= \frac{H_b(q_{S_1}) - C}{H_b(q_{S_1}) - H_b(m)},\\
p_U^\star(4) &= 0.
\end{align*}

The constraint~(\ref{dist-reduced-qx}) is inactive if
\begin{align*}
&H_b(q_X)a < R,\\
&H_b(q_X)\frac{C - H_b(m)}{H_b(q_{S_1}) - H_b(m)} < R, \\
&C < \frac{R (H_b(q_{S_1}) - H_b(m))}{H_b(q_X)} + H_b(m).
\end{align*}

Therefore, $
D_{ }^{(B)}(R, C) = \frac{q_X(H_b(q_{S_1}) - C)}{H_b(q_{S_1}) - H_b(m)} 
$ if $H_b(q_{S_1}) \leq C < \frac{R (H_b(q_{S_1}) - H_b(m))}{H_b(q_X)}  + H_b(m)$.

\textbf{Case 3:} Both the rate constraint~(\ref{dist-reduced-qx}) and the classification constraint~\eqref{class-reduced} are active.

From case 2, we know that the classification constraint~\eqref{class-reduced} is active if 
\begin{align*}
a = \frac{C - H_b(m)}{H_b(q_{S_1}) - H_b(m)}.
\end{align*}
And
\[
D_{ }^{(B)}(R, C) = \frac{q_X(H_b(q_{S_1}) - C)}{H_b(q_{S_1}) - H_b(m)}.
\]

The rate constraint~(\ref{dist-reduced-qx}) is active if 
\begin{align*}
   C = \frac{R (H_b(q_{S_1}) - H_b(m))}{H_b(q_X)}  + H_b(m).
\end{align*}

Therefore, $D_{ }^{(B)}(R, C) = \frac{q_X(C - H_b(q_{S_1}))}{H_b(m) - H_b(q_{S_1})}$ if $C = \frac{R (H_b(q_{S_1}) - H_b(m))}{H_b(q_X)}  + H_b(m)$.

\textbf{Case 4.}  Both constraint \eqref{dist-reduced-qx} and constraint \eqref{class-reduced} are inactive.

When \( C > H_b(q_S)  \), implying that the classification constraint~\eqref{class-reduced} is inactive, and the rate \( R \) is sufficiently large such that \( R > H_b(q_X) \), meaning the rate constraint~\eqref{dist-reduced-qx} is also inactive, the minimum achievable distortion \( D^{(B)}_{ }(R, C) \) reaches its theoretical lower bound, i.e., \( D^{(B)}_{ }(R, C) = 0 \). Therefore, $D^{(B)}_{ }(R, C) = 0$ if $ C > H_b(q_S) \text{ and } R > H_b(q_X)$.

Collecting the expressions obtained in Cases~1-4 yields the stated closed-form characterization of $D_{ }^{(B)}(R, C)$, completing the proof.

\subsection{Proof of Theorem \ref{theo:lowerBoundary}}\label{app:proof_lowerBoundary}

The lower boundary of DC achievable
region \( \Pi(p_{Z|X}) \), under the Hamming distortion, is defined as 
\begin{mini!}|s|[2] 
{p_{\hat{X}|Z}} 
{ P(X \neq \hat{X})} 
{\label{RDPC}} 
{D(C) = } 
\addConstraint{H(S | \hat{X})}{\leq C.} 
\end{mini!}

We next express the objective in terms of the parameters \(\{q_i,\epsilon_i,p_i\}\).
The Hamming distance term can be written as
\begin{align*}
\begin{split}
P(X \neq \hat{X}) &= \sum_{i=1}^n \Big[ p_{X|Z}(0|i) p_Z(i) p_{\hat{X}|Z}(1|i) \\
&+ p_{X|Z}(1|i) p_Z(i) p_{\hat{X}|Z}(0|i) \Big],\\
&= \sum_{i=1}^n \Big[ q_i (1 - \epsilon_i) + q_i (2 \epsilon_i - 1) p_i \Big].
    \end{split}    
\end{align*}

Recall that the channel between \(X\) and \(\hat{X}\) is a binary symmetric channel, so that $\hat{X} = X \oplus S_2$, where \(S_2 \sim \mathrm{Bern}(P_{\hat{X}|X}(1|0))\).
Moreover, the task variable satisfies \(S = X \oplus S_1\), with \(S_1 \sim \mathrm{Bern}(q_{S_1})\).
Combining these relations, we obtain
\[
S = X \oplus S_1 = (\hat{X} \oplus S_2) \oplus S_1 = \hat{X} \oplus (S_1 \oplus S_2),
\]
which implies that $H(S | \hat{X}) = H(S_1 \oplus S_2)$.
 
Since the modulo-\(2\) sum of two independent Bernoulli random variables is again Bernoulli, its parameter is
\begin{align*}
P(S_1 \oplus S_2 = 1) &= q_{S_1} * P_{\hat{X}|X}(1|0) \\
&= (1 - q_{S_1})P_{\hat{X}|X}(1|0) + q_{S_1}(1 - P_{\hat{X}|X}(1|0)),    
\end{align*}
    
Therefore, \( S_1 \oplus S_2 \sim \mathrm{Bern}(q_{S_1} * P_{\hat{X}|X}(1|0)) \) and $H(S | \hat{X}) = H(S_1 \oplus S_2) = H(q_{S_1} * P_{\hat{X}|X}(1|0))$.

By Mrs. Gerber's Lemma \cite{Wyner1973, Wang2024}, we have: $H(S | \hat{X}) \leq C \Rightarrow H(X | \hat{X}) \leq H(C_0)$, where \(C_0 \triangleq \frac{H^{-1}(C) - q_{S_1}}{1 - 2q_{S_1}},\) and \( H^{-1} : [0, 1] \rightarrow \left[ 0, \frac{1}{2} \right] \),
\(
C_0 = \frac{H^{-1}(C) - q_{S_1}}{1 - 2q_{S_1}} \leq \frac{\frac{1}{2} - q_{S_1}}{1 - 2q_{S_1}} = \frac{1}{2}.
\) 

If $P_{\hat{X}|X}(1|0) \leq C_0 \leq \frac{1}{2}$, then
\begin{align*}
P_{\hat{X}|X}(1|0) \leq C_0 = \frac{H^{-1}(C) - q_{S_1}}{1 - 2q_{S_1}}, \\
\Rightarrow q_{S_1} * P_{\hat{X}|X}(1|0) \leq H^{-1}(C),  \\
H(S|\hat{X}) = H(q_{S_1} * P_{\hat{X}|X}(1|0)) \leq C
\end{align*}

Hence, from $H(S | \hat{X}) = H(S_1 \oplus S_2) = H(q_{S_1} * P_{\hat{X}|X}(1|0)) \leq C$, we have: $q_{S_1} * P_{\hat{X}|X}(1|0) \leq H^{-1}(C),$
\[
(1 - 2q_{S_1}) P_{\hat{X}|X}(1|0) + q_{S_1} \leq  H^{-1}(C).
\]
We now express the crossover probability \(P_{\hat{X}|X}(1|0)\) in terms of \(\{q_i,\epsilon_i,p_i\}\):
\begin{align*}
P_{\hat{X}|X}(1|0) &= P_{\hat{X}|X}(0|1) = \frac{p_{\hat{X},X}(0,1)}{p_X(1)} \\
&= \frac{\sum_{i = 1}^{n}p_{X|Z}(1|i)p_Z(i)p_{\hat{X}|Z}(0|i)}{1 - \sum_{i=1}^{n}q_i(1 - \epsilon_i)} \\
&= \frac{\sum_{i=1}^{n} q_i \epsilon_i p_i}{1 - \sum_{i=1}^{n}q_i(1 - \epsilon_i)},    
\end{align*}

Hence, the classification constraint is as follows
\[
(1 - 2q_{S_1}) \Big(\sum_{i=1}^{n} q_i \epsilon_i p_i \Big) - (H^{-1}(C) - q_{S_1}) \Big(1 - \sum_{i=1}^{n} q_i (1 - \epsilon_i \Big) \!\! \leq 0.
\]

Putting the above together yields a linear program in \(\{p_i\}_{i=1}^n\) as represented by the optimization problem (\ref{prob:D_C_lower_boundary}).

\subsection{Proof of Theorem \ref{theo:ratePenaltyBound}}\label{app:proof_ratePenaltyBound}

Recall that the minimum rate \(R^{(\infty)}(\Theta(R))\) is given by the solution to the following optimization problem:
\begin{mini!}|s|[2] 
{p_{\hat{X}_1, \hat{X}_2 | X}} 
{I(X; \hat{X}_1, \hat{X}_2)} 
{\label{RDPC}} 
{R^{(\infty)}(\Theta(R)) =} 
\addConstraint{\!\!\!\!\!\!\!\!\!\!\! P(X \neq \hat{X}_1)}{\leq C_0} 
\addConstraint{\!\!\!\!\!\!\!\!\!\!\! P(X \neq \hat{X}_2)}{\leq b} 
\addConstraint{\!\!\!\!\!\!\!\!\!\!\! H(S | \hat{X}_1)}{\leq H(C_0 (1 - 2 q_{s_1})) + q_{s_1})} 
\addConstraint{\!\!\!\!\!\!\!\!\!\!\! H(S | \hat{X}_2)}{\leq C_{\min}.} 
\end{mini!}

To make the problem explicit, we expand the objective and constraints in terms of the conditional pmf $p_{\hat{X}_1,\hat{X}_2|X}$. The mutual information term is
\[
\begin{split}
&I(X; \hat{X}_1, \hat{X}_2) = \sum_{x, \hat{x}_1, \hat{x}_2 \in \{0, 1\}} p(x, \hat{x}_1, \hat{x}_2) \log \frac{p(x, \hat{x}_1, \hat{x}_2)}{p(x) p(\hat{x}_1, \hat{x}_2)},\\
&= (1 - q) \sum_{i, j \in \{0, 1\}} p_{ij|0} \log \frac{p_{ij|0}}{(1 - q) p_{ij|0} + q p_{ij|1}}\\
&+ q \sum_{i, j \in \{0, 1\}} p_{ij|1} \log \frac{p_{ij|1}}{(1 - q) p_{ij|0} + q p_{ij|1}}.
\end{split}
\]

\begin{figure*}
\begin{align}
\label{lowerBoundObjective}
&I(X; \hat{X}_1, \hat{X}_2) \nonumber \\
&\geq I_{LB}(X; \hat{X}_1, \hat{X}_2) = (1 - q) \nonumber \\ 
&\Bigg[ p_{00|0} \log \frac{p_{00|0}}{(1 - q) p_{00|0} + q p_{00|1}} + p_{01|0} \log \frac{p_{01|0}}{(1 - q) p_{01|0} + q p_{01|1}} + p_{11|0} \log \frac{p_{11|0}}{(1 - q) p_{11|0} + q p_{11|1}} \Bigg]  \\
&+ q \Bigg[ p_{00|1} \log \frac{p_{00|1}}{(1 - q) p_{00|0} + q p_{00|1}} + p_{01|1} \log \frac{p_{01|1}}{(1 - q) p_{01|0} + q p_{01|1}} + p_{11|1} \log \frac{p_{11|1}}{(1 - q) p_{11|0} + q p_{11|1}} \Bigg]. \nonumber
\end{align}
\end{figure*}

The Hamming distortion constraints can be written as
\[
\begin{split}
P(X \neq \hat{X}_1) &= p_{\hat{X}_1 | X}(1 | 0) p_X(0) + p_{\hat{X}_1 | X}(0 | 1) p_X(1),\\
&= (1 - q)(p_{10|0} + p_{11|0}) + q(p_{00|1} + p_{01|1}).
\end{split}
\]
\[
\begin{split}
P(X \neq \hat{X}_2) &= p_{\hat{X}_2 | X}(1 | 0) p_X(0) + p_{\hat{X}_2 | X}(0 | 1) p_X(1),\\
&= (1 - q)(p_{01|0} + p_{11|0}) + q(p_{00|1} + p_{10|1}).
\end{split}
\]

We next translate the two classification constraints into algebraic inequalities. Similarly, assume the channel of $X$ and $\hat{X}_1$ is binary symmetric channel: $\hat{X}_1 = X \oplus S_2$, $S_2 \sim \text{Bern}(P_{\hat{X}|X}(1|0))$. We have, \( S = X \oplus S_1 \) with \( S_1 \sim \mathrm{Bern}(q_{s_1}) \). Therefore, 
\begin{align*}
H(S | \hat{X}_1) &= H(S_1 \oplus S_2) = H(q_{s_1} * P_{\hat{X}_1|X}(1|0)) \\
&\leq H(C_0 (1 - 2 q_{s_1})) + q_{s_1}),
\end{align*}
\begin{align*}
(q_{s_1} * P_{\hat{X}_1|X}(1|0) &\leq H^{-1}(H(C_0 (1 - 2 q_{s_1})) + q_{s_1})) \\
&= C_0 (1 - 2 q_{s_1})) + q_{s_1},    
\end{align*}
\begin{align*}
(1 - q_{s_1})(p_{10|0} + p_{11|0}) + q_{s_1} (1 - p_{10|0} - p_{11|0}) \\
\leq C_0 (1 - 2 q_{s_1})) + q_{s_1}.   
\end{align*}

The constraint $H(S | \hat{X}_2) \leq C_{\min}$ implies that
\begin{align*}
(1 - q_{s_1})(p_{01|0} + p_{11|0}) + q_{s_1} (1 - p_{01|0} - p_{11|0}) \\
\leq H^{-1}(C_{\min}).   
\end{align*}

Building on the framework developed in~\cite{Qian2023RateDistortionPerception}, we derive lower and upper bounds for the rate penalty $R^{(\infty)}(\Theta(R))$.
 
\textbf{The Lower Bound of Rate.} The objective function is not differentiable at the boundary points, i.e., at points where some \( p_{ij|k} = 0 \). According to the log-sum inequality \cite{Cover2006}, the tight lower bound of the objective function is derived as (\ref{lowerBoundObjective}). Then, the resulting relaxation leads to the linear program (\ref{R_LB}). 

\textbf{The Upper Bound of Rate.} We now obtain a complementary upper bound using the geometry of the level sets of $R^{(\infty)}(D,C)$. 
Observe that the RDC function $R^{(\infty)}(D,C)$ is convex~\cite{Wang2024}, and therefore each level curve $C(D)$ defined by $R^{(\infty)}(D,C(D))=\hat{R}$ is convex as well. 

By the implicit function theorem, the slope along a level curve satisfies $C'(D) = - \frac{\partial R^{(\infty)}(D,C(D))/\partial D}{\partial R^{(\infty)}(D,C(D))/\partial C}$. By convexity, $C(D)$ lies above its supporting line at $D=b$, namely the affine map
$D \mapsto C'(b)(D-b)$. Consequently, to guarantee universality over $\Theta(R)$ it is sufficient that the entire segment between $(C_0,\, C'(b)(C_0-b))$ and $(b,\,0)$ is achievable. 
Moreover, since the achievable set is convex, it suffices to verify achievability of the two endpoints $(C_0,\, C'(b)(C_0-b))$ and $(b,\,0)$, which then implies achievability of the full segment. Substituting these two endpoints into the RDC function yields an explicit upper bound, denoted by $R^{(\infty)}_{UB}(\Theta(R))$.

Recall from Theorem~\ref{RDC_Wang2024} that
\begin{align*}
R^{(\infty)}(D,C)= \begin{cases}
H(b) - H(D), \hfill 0 \leq D \leq C_0 \text{ and } D\leq b,\\
H(b) - H(C_0), \hfill C_0 < D \leq b  \text{ and } C_0\leq b,\\
0, \hfill \min\{D,C_0\}> b.
\end{cases}
\end{align*}

In particular, when $D \in [C_0,b]$, the slope of the tangent for the upper bound is
\begin{align*}
\kappa_{UB} = C'(b)  
= \left. \frac{\partial C}{\partial D} \right|_{D=C_{0},\, C = C_{\min}} 
= - \left. \frac{\tfrac{\partial R}{\partial D}}{\tfrac{\partial R}{\partial C}} \right|_{D=C_{0},\, C \!\! = C_{\min}} \!\!\!\!\!\! = 0.    
\end{align*}
Hence, $H^{-1}(C^{'}(b)(C_0 - b)) = 0$. Finally, the upper bound $R^{(\infty)}_{UB}(\Theta(R))$ can be obtained by solving the linear programming problem (\ref{R_UB}).

\end{document}